\documentclass[11pt]{article}
\RequirePackage[T1]{fontenc}
\RequirePackage[utf8]{inputenc}
\RequirePackage[english]{babel}
\RequirePackage{lmodern}
\RequirePackage{microtype}
\RequirePackage{csquotes}
\RequirePackage[a4paper,top=2.35cm,bottom=2.35cm,left=2.55cm,right=2.55cm,footskip=1.15cm]{geometry}
\RequirePackage{amsmath,amssymb,amsthm,mathtools}
\RequirePackage{booktabs,tabularx,longtable,array,multirow}
\newcolumntype{Y}{>{\raggedright\arraybackslash}X}
\newcolumntype{L}[1]{>{\raggedright\arraybackslash}p{#1}}
\RequirePackage{graphicx}
\RequirePackage{enumitem}
\RequirePackage{xcolor}
\definecolor{elpblack}{gray}{0.00}
\definecolor{elpgraydark}{gray}{0.25}
\definecolor{elpgraymid}{gray}{0.55}
\definecolor{elpgraylight}{gray}{0.96}
\RequirePackage[hyperfootnotes=false]{hyperref}
\RequirePackage[nameinlink,noabbrev]{cleveref}
\RequirePackage{titlesec}
\RequirePackage{fancyhdr}
\RequirePackage{setspace}
\RequirePackage{caption}
\RequirePackage{subcaption}
\RequirePackage{url}
\RequirePackage{float}
\RequirePackage{needspace}
\RequirePackage{tikz}
\usetikzlibrary{arrows.meta,positioning,calc,fit,backgrounds,decorations.pathreplacing,patterns}
\RequirePackage[
 backend=biber,style=authoryear,natbib=true,maxcitenames=2,maxbibnames=20,
 uniquename=false,doi=true,url=true,isbn=false,giveninits=true,dashed=false
]{biblatex}
\defbibheading{bibliography}[\refname]{\section*{#1}}
\hypersetup{hidelinks,pdfborder={0 0 0}}
\setlist{nosep,leftmargin=*}
\titleformat{\section}{\Needspace{7\baselineskip}\large\bfseries}{\thesection}{0.65em}{}
\titleformat{\subsection}{\Needspace{5\baselineskip}\normalsize\bfseries}{\thesubsection}{0.65em}{}
\titleformat{\subsubsection}{\normalsize\itshape}{\thesubsubsection}{0.65em}{}
\fancypagestyle{plain}{\fancyhf{}\fancyfoot[C]{\small\thepage}}
\newtheorem{theorem}{Theorem}[section]
\newtheorem{proposition}[theorem]{Proposition}

\theoremstyle{definition}
\newtheorem{definition}[theorem]{Definition}
\newtheorem{criterion}[theorem]{Criterion}

\theoremstyle{remark}

\newcommand{\E}{\mathbb{E}}

\newcommand{\ind}{\mathbf{1}}

\newcommand{\paperversion}[1]{\gdef\PaperVersion{#1}}
\newcommand{\paperseries}[1]{\gdef\PaperSeries{#1}}
\newcommand{\papersubtitle}[1]{\gdef\PaperSubtitle{#1}}
\newcommand{\paperaffiliation}[1]{\gdef\PaperAffiliation{#1}}
\newcommand{\papercontact}[1]{\gdef\PaperContact{#1}}

\AtBeginDocument{%
  \providecommand{\PaperVersion}{draft}%
  \providecommand{\PaperSeries}{ForesightFlow Research \textperiodcentered{} Event-Linked Perpetuals}%
  \providecommand{\PaperSubtitle}{}%
  \providecommand{\PaperAffiliation}{}%
  \providecommand{\PaperContact}{}%
  \providecommand{\PaperStatus}{}%
}
\newcommand{\authoritynote}[1]{\begin{center}\begin{minipage}{0.92\linewidth}\small\textit{#1}\end{minipage}\end{center}}
\tikzset{
 pbox/.style={draw=black,rounded corners=2pt,align=center,inner sep=5pt,font=\footnotesize,text width=23mm,minimum height=9mm,fill=white},
 pwide/.style={draw=black,rounded corners=2pt,align=center,inner sep=5pt,font=\footnotesize,text width=31mm,minimum height=10mm,fill=white},
 psoft/.style={draw=black,dashed,rounded corners=2pt,align=center,inner sep=5pt,font=\footnotesize,text width=26mm,minimum height=9mm,fill=white},
 parr/.style={-{Latex[length=2.0mm]},line width=0.65pt,draw=black},
 pdash/.style={-{Latex[length=2.0mm]},line width=0.65pt,dashed,draw=black},
 pbrace/.style={decorate,decoration={brace,amplitude=4pt},line width=0.55pt,draw=black},
 ffbox/.style={pbox},ffsmall/.style={pbox,font=\scriptsize,inner sep=4pt},ffsoft/.style={psoft},ffarrow/.style={parr},ffdashed/.style={pdash}
}
\makeatletter
\renewcommand{\maketitle}{%
  \begin{center}
    {\small\scshape \PaperSeries\par}
    \vspace{0em}
    {\LARGE\bfseries \@title\par}
    \ifx\PaperSubtitle\@empty\else\vspace{0em}{\normalsize \PaperSubtitle\par}\fi
    \vspace{0em}
    {\large \@author\par}
    \ifx\PaperAffiliation\@empty\else\vspace{0em}{\small \PaperAffiliation\par}\fi
    \ifx\PaperContact\@empty\else\vspace{0em}{\small \PaperContact\par}\fi
    \vspace{0em}{\small \@date\par}
  \end{center}
  \ifx\PaperStatus\@empty\else\authoritynote{\PaperStatus}\fi
  \vspace{0em}
}
\makeatother

\hypersetup{pdftitle={Two Models of Event Finality: Functional Alignment, Contestability, and Empirical Comparability on Polymarket and Kalshi},pdfauthor={Maksym Nechepurenko},pdfsubject={Event-Linked Perpetuals --- Kalshi Research Track: Cross-Venue Comparison},pdfkeywords={prediction markets, event contracts, finality, settlement, Polymarket, Kalshi, comparability},pdflang={en}}
\paperseries{ForesightFlow Research \textperiodcentered{} Event-Linked Perpetuals \textperiodcentered{} Paper 8}
\paperversion{r0.8.6}
\papersubtitle{Functional Alignment, Contestability, and Empirical Comparability on Polymarket and Kalshi}
\title{Two Models of Event Finality}
\author{Maksym Nechepurenko}
\paperaffiliation{Research Department of Devnull FZCO, Dubai, UAE}
\papercontact{\href{mailto:maksym@devnull.ae}{maksym@devnull.ae} \quad ORCID 0000-0002-9515-8841}
\date{September 2026 \textbar{} Version r0.8.6}
\begin{document}
\maketitle
\begingroup
\renewcommand{\thefootnote}{}
\footnotetext{\footnotesize\textit{ELP --- Kalshi Research Track: Cross-Venue Comparison.}}
\endgroup
\begin{abstract}
Event contracts can reach economic finality through different institutional paths. Polymarket evidence distinguishes oracle adjudication, adapter consumption, Conditional Tokens payout recording, technical redeemability, and optional holder redemption. Kalshi evidence distinguishes venue determination, public finalization, lifecycle messages, and exact REST settlement fields. Similar words such as \emph{resolved}, \emph{settled}, and \emph{finalized} therefore do not define comparable endpoints by themselves.

This paper develops a mechanism-aware comparison certificate whose coordinates are functional roles rather than common labels. A scalar cross-venue duration is admissible only when payoff semantics, endpoint functions, calendar support, censoring, observation grades, and dependence units are jointly aligned. The design separates exact paired, interval-qualified paired, and standardized unpaired estimands.

Both venue-native empirical layers are available. Published Polymarket evidence contains 108,638 exact-linked conditions, 99,283 protocol payout records, 92,158 observed redemptions of any amount, and 91,817 observed positive-payout redemptions. Kalshi evidence contains 152,694 ordinary markets reconstructed as at risk at the enrollment boundary and 7,611,594 exact MVE market objects. Exact public endpoints are observed for 71,657 ordinary markets and 7,357,576 MVE objects. Exact determination-to-endpoint pairs number 70,979 for ordinary markets and 126,806 for MVE.

These are Kalshi-native finality results, not Polymarket--Kalshi estimates. REST endpoint completeness can be high when exact lifecycle paths are sparse, and transport gaps can preserve two exact clocks while preventing claims about intermediate revisions. Current MVE timer fields do not identify historical version-consistent timer rules, and public finalization does not identify member cash. Cross-venue numerical comparison remains blocked until Paper~7.3 supplies blind first/stable decidability clocks and a semantic/calendar registry establishes genuinely comparable events and endpoints.
\end{abstract}

\noindent\textbf{Keywords:} prediction markets; event contracts; finality; settlement; Polymarket; Kalshi; contestability; matching; censoring.\\
\textbf{JEL:} G13, G14, G18.

\par\break
\section{Introduction}\label{sec:introduction}

A prediction market does not become final at one universal instant. A real-world event can occur before an official source publishes evidence. The evidence can become contractually sufficient before a venue records a result. A first result can remain contestable. A final mechanism result can precede the state that makes a claim redeemable or credits positions. A technically redeemable claim can remain unredeemed by its holder. Each interval belongs to a different actor, balance sheet, and failure mode.

This distinction becomes visible when two venues implement the same broad economic proposition through different architectures. Polymarket uses an oracle-mediated and smart-contract-mediated path. Kalshi uses venue-administered market states, rules, review, and settlement. Calling both terminal records ``settlement'' does not make them the same event. A comparison that subtracts whichever two timestamps happen to be available may rank labels rather than mechanisms.

The problem is not solved by adding more timestamps. A comparison also requires semantic equivalence. Two contracts can settle to the same realized answer while defining different payoff functions over states that did not occur: different thresholds, source precedence, correction policies, postponement rules, time zones, invalidity clauses, or scalar rounding. Ex post answer agreement is therefore weaker than ex ante contractual equivalence. Likewise, a venue-wide latency distribution can differ because of event composition even when the underlying mechanisms behave identically within every comparable stratum.

This paper asks four questions:
\begin{enumerate}
  \item Which venue-native states perform comparable economic functions, and which merely share names?
  \item Under what conditions does a cross-venue duration contrast estimate the same quantity?
  \item How should contracts be matched when titles, rules, sources, timing conventions, and payoff maps differ?
  \item Which claims survive incomplete clocks, non-overlapping evidence windows, differential censoring, or a small exact-match cohort?
\end{enumerate}

The paper's answer is a comparison certificate rather than a universal timestamp. Every candidate pair is characterized by a payoff-equivalence tier, an endpoint map, coordinate-specific observation grades, a calendar-support class, a censoring rule, and an inference cluster. A scalar duration enters the primary paired analysis only if all of these elements pass. Otherwise the observation remains useful in a lower-strength analysis: interval bounds, terminal-state comparison, standardized unpaired distributions, or descriptive institutional evidence.

\newpage
\subsection{Evidence inputs}\label{subsec:evidence-inputs}

The comparison combines published Polymarket mechanism evidence with distinct Kalshi population, lifecycle, and study-design inputs. Papers 5 Part I and II are public SSRN sources for Polymarket oracle adjudication, adapter terminality, protocol payout recording, and observed redemption \parencite{p5a2026,p5b2026}. Their denominators and claim boundaries remain unchanged.

The KMVE V1 dataset defines the Kalshi seven-day MVE population, parent hierarchy, structural signatures, current activity fields, and REST/WebSocket population crosswalk \parencite{kmve2026}. Paper 7.2 specifies how this structure determines the relevant dependence units, while Paper 7.4 specifies the prospective price-study design without providing a cross-venue price estimate \parencite{p72_2026,p74_2026}.

Kalshi lifecycle evidence supplies native cohorts, exact endpoint-source classes, and exact determination-to-endpoint pairs \parencite{p71_2026}. The ordinary cohort contains 152,694 markets reconstructed as at risk at the enrollment boundary. The MVE cohort contains 7,611,594 REST enrollment objects, of which 7,357,576 have an exact public endpoint by the follow-up boundary. Source-publication and contractual-decidability comparisons additionally require the separately specified Paper 7.3 record \parencite{p73_2026}.

The endpoint join shows why observation surface is itself a comparison coordinate. Most MVE exact endpoints are REST-only, while a smaller subset has an exact WebSocket determination path. Endpoint completeness and path completeness are therefore distinct. Historical-prefix reconstruction validates interim first/last pair populations and the equal-second class, but does not define the final cohorts or endpoint union.

Semantic equivalence and common-calendar support remain necessary conditions for a paired comparison. Topic similarity does not pair the prospective Kalshi population to the published Polymarket population, and no venue ranking is produced from native duration distributions.

\subsection{Contributions}\label{subsec:contributions}

The paper makes nine contributions.

\textbf{First}, it defines a functional finality vector that aligns economic roles without renaming venue-native states. Polymarket proposal is not called Kalshi determination; Oracle settlement is not called exchange finalization; protocol redemption is not called member cash settlement.

\textbf{Second}, it introduces a comparison certificate combining semantic match, functional endpoint alignment, observation grade, calendar support, censoring, and dependence unit. The certificate makes comparability an auditable data object rather than prose judgment.

\textbf{Third}, it strengthens the endpoint-alignment result. A duration contrast is admissible only when both endpoint functions and their evidence grades are compatible. Equal labels, equal realized answers, or equal timestamps are insufficient.

\textbf{Fourth}, it specifies payoff-preserving Tier A/B/C matching. Tier A requires equivalence of the contractual payoff functional across admissible states, not merely a common title or final answer.

\textbf{Fifth}, it treats incomplete and non-overlapping observation as part of the estimand. Exact, interval-observed, snapshot-only, proxy, unmeasured, conflicting, and right-censored coordinates remain distinct. Midpoints and convenient downstream substitutes are excluded from primary inference.

\textbf{Sixth}, it separates paired and standardized unpaired questions. A matched-event contrast asks how two mechanisms handled the same contractual event. A standardized contrast asks how mechanism-specific distributions differ under a common composition. These are not interchangeable estimands.

\textbf{Seventh}, it defines a multidimensional outcome space: latency, path complexity, contestability, revision, rule/source observability, governance concentration, venue/protocol terminality, cash realization, and capital lock. No total ranking is asserted without an explicit weighting function or Pareto dominance.

\textbf{Eighth}, it integrates the published Polymarket evidence layer while retaining the Kalshi and matching conditions. A small or empty Tier-A cohort is an empirical result about cross-venue contractual overlap, not a reason to relax semantic criteria.

\textbf{Ninth}, it treats population hierarchy as part of cross-venue identification. Market objects, event keys, collection identities, selected-leg primitives, structural signatures, and current activity fields are distinct statistical units; child-ticker multiplicity cannot be converted into independent finality observations.

\subsection{Scope and non-claims}\label{subsec:scope}

The analysis is descriptive and mechanism-aware. It does not establish legal finality in every jurisdiction, identify private deliberation, infer beneficial ownership from addresses, or rank institutional quality from speed alone. It does not assume that public market states exhaust either venue's internal review. It does not treat a Polymarket redemption event as total entitlement completion, because published upstream evidence leaves the balance-complete entitlement denominator blocked. It does not treat a Kalshi public finalization state as member-level withdrawable cash without separate account evidence.

The manuscript contains published Polymarket numerical inputs and qualified Kalshi-native finality inputs but no cross-venue numerical result. Matched outputs enter only after semantic and calendar certificates pass. This ordering is deliberate: a failure of semantic overlap, calendar support, or clock comparability narrows the result rather than changing the question after inspection.

\section{Related Work and Position in the Programme}\label{sec:related}

Prediction-market research traditionally studies information aggregation, calibration, contract design, and price interpretation \parencite{hanson2003,manski2006,wolfers2004}. Those literatures explain why event claims can aggregate beliefs, but they do not provide a common empirical definition of finality across heterogeneous institutional architectures.

Research on financial-market infrastructure distinguishes execution, clearing, settlement, finality, and loss allocation \parencite{cpmi2012}. Event contracts add a semantic layer: a rule-governed real-world proposition must be converted into an admissible payoff before financial settlement can occur. The relevant terminal path can therefore contain both information-governance states and accounting states.

Oracle research emphasizes the separation between external evidence, proposal, dispute, and final answer \parencite{eskandari2021}. The published Polymarket Parts I and II add exact adapter, protocol, payout, and holder-action layers. The Kalshi papers develop the corresponding venue-administered lifecycle and contractual-decidability layers. Paper 8 owns neither venue reconstruction. Its novelty is the cross-architecture identification problem: which functional coordinates can be compared, on which units, under which observation processes.

Statistically, the study draws on event-history and incomplete-data methods. Kaplan--Meier estimators handle right-censored single endpoints; Aalen--Johansen methods describe observed multistate transitions; Turnbull-type methods accommodate interval-censored times \parencite{kaplan1958,aalen1978,andersen1993,turnbull1976}. These tools are useful only after the institutional endpoint has been defined. A sophisticated estimator cannot make a proposal equivalent to a determination or recover a contestability path erased by a terminal snapshot.

The matching design is related to the broader observational-study literature \parencite{rosenbaum1983,stuart2010}, but the objective here is not a causal treatment effect of venue choice. Matching is used first to establish contractual comparability. Covariate balance does not repair payoff non-equivalence. Likewise, cluster-robust inference can account for repeated listings or shared events, but it does not turn child markets into independent real-world events \parencite{cameron2015}.

Within the Event-Linked Perpetuals programme, Paper 5 Part I owns Polymarket oracle adjudication, rule and clarification history, request generations, and adapter terminality. Part II owns exact question-to-condition mapping, protocol payout recording, payout vectors, observed redemption, and the entitlement boundary. Paper 7.1 owns Kalshi lifecycle and finalization; Paper 7.3 owns source and decidability clocks. Paper 6 uses these states for margin and conversion; Paper 9 uses leg-specific clocks for asynchronous spreads \parencite{p5a2026,p5b2026,p71_2026,p73_2026,p6_2026,p9_2026}.

The formal results below deliberately identify their ownership. Path non-identification under terminal coarsening is inherited from the venue-native studies. The new contribution here is the comparison layer: functional endpoint mapping, match certificates, calendar support, observation-compatible paired estimands, and multidimensional ordering.

\section{Venue-Native Mechanisms and Published Evidence}\label{sec:native}

A comparison begins with two native histories, not with a shared vocabulary imposed in advance.

\subsection{Polymarket: oracle, adapter, protocol, and holder layers}\label{subsec:poly-native}

Let a Polymarket contract history contain the following functional states:
\begin{enumerate}
  \item a versioned rule and ancillary-data state;
  \item technical request creation and one or more request generations;
  \item first proposal;
  \item dispute, reset, successor generation, or re-proposal where observed;
  \item final Oracle value;
  \item adapter consumption and terminal emission;
  \item Conditional Tokens payout recording, which makes the ordered payout rule effective;
  \item technical redeemability;
  \item observed redemption of any amount and observed positive-payout redemption.
\end{enumerate}

These states can share a transaction or block timestamp without being the same state. Canonical chain identity and log order can preserve ordering even when wall-clock separation is zero. Conversely, the published Part II evidence does not claim a universal same-transaction adapter--protocol ordering where publication-grade cross-contract keys are unavailable \parencite{p5b2026}.

Protocol payout recording is the Polymarket protocol-finality coordinate used in this paper. Observed redemption is a later holder action. It does not identify the fraction of all entitlement redeemed because total winning-position entitlement is not reconstructed at publication grade.

\newpage
\subsection{Kalshi: venue-administered determination and finalization}\label{subsec:kalshi-native}

The Kalshi lifecycle authority distinguishes market creation and activation, deactivation, close and close-time revision, first determination, contestability or amendment where publicly observable, and final settlement/finalization \parencite{p71_2026}. Public rules are distributed across market, event, series, source, and clarification surfaces. A current object is not automatically the historical object that governed an earlier determination.

Paper 7.3 adds a separate semantic clock construction. A named source can publish before the contract is uniquely decidable; later corrections can distinguish first from stable decidability \parencite{p73_2026}. These clocks are not inferred from the venue's determination timestamp, because doing so would define the primary source-to-venue delay as zero by construction.

Public finalization is a venue-level state. Member-level settlement or cash availability is a separate coordinate and can remain unmeasured in the population study. Ordinary and multivariate-event markets are separate families; they are not silently pooled when their creation process, message surface, or dependence structure differs.

\begin{figure}[t]
\centering
\begin{tikzpicture}[node distance=4mm and 4mm]
\node[pbox,text width=18mm] (p1) {Proposal};
\node[pbox,text width=18mm,right=of p1] (p2) {Dispute / reset};
\node[pbox,text width=18mm,right=of p2] (p3) {Oracle final};
\node[pbox,text width=18mm,right=of p3] (p4) {Protocol payout};
\node[pbox,text width=18mm,right=of p4] (p5) {Observed redemption};
\foreach \a/\b in {p1/p2,p2/p3,p3/p4,p4/p5}{\draw[parr] (\a)--(\b);}
\node[anchor=east,font=\footnotesize\bfseries] at ([xshift=-3mm]p1.west) {Polymarket};

\node[pbox,text width=18mm,below=13mm of p1] (k1) {Close};
\node[pbox,font=\scriptsize,text width=18mm,right=of k1] (k2) {First\\determination};
\node[pbox,text width=18mm,right=of k2] (k3) {Contest / amendment};
\node[pbox,text width=18mm,right=of k3] (k4) {Finalized};
\node[pbox,text width=18mm,right=of k4] (k5) {Member cash if observed};
\foreach \a/\b in {k1/k2,k2/k3,k3/k4,k4/k5}{\draw[parr] (\a)--(\b);}
\node[anchor=east,font=\footnotesize\bfseries] at ([xshift=-3mm]k1.west) {Kalshi};

\draw[pdash] (p1.south) -- (k2.north);
\draw[pdash] (p4.south) -- (k4.north);
\end{tikzpicture}
\caption{Venue-native paths. Dashed connectors align functions without claiming that the linked states have the same authority, challenge process, or observation surface.}
\label{fig:native-paths}
\end{figure}

\subsection{Functional crosswalk}\label{subsec:crosswalk}

Table~\ref{tab:crosswalk} aligns roles, not names. Every row retains a mechanism-specific definition and can remain unmeasured on one side.

\begin{table}[H]
\centering
\caption{Functional crosswalk. Rows are comparison roles, not claims that the native states are institutionally identical.}
\label{tab:crosswalk}
\small
\begin{tabularx}{\textwidth}{@{}p{0.19\textwidth}XX@{}}
\toprule
Functional role & Polymarket native evidence & Kalshi native evidence \\
\midrule
Rule state & Ancillary data, rules, clarification versions, adapter configuration & Series/event sources, market rules, timing clauses, additional details \\
External evidence & Named public source where independently archived & Named settlement source and exact release object where archived \\
Contractual decidability & Generally unmeasured at population scale in the published Paper 5 evidence & First/stable decidability in Paper 7.3 adjudicated subset \\
First recorded result & First Oracle proposal for an eligible generation & First venue determination \\
Contestability & Dispute, reset, successor generation, re-proposal & Settlement timer, disputed/amended state, re-determination where observable \\
Final mechanism result & Successful final Oracle value & Last applicable determination after observed amendment/review \\
Venue/protocol terminal & Adapter terminality and CTF payout recording, kept separate & Finalized/settled venue market state \\
Cash realization & Observed CTF redemption; positive payout separately & Member settlement evidence if lawfully and population-compatibly observed \\
\bottomrule
\end{tabularx}
\end{table}

The first-result row is a useful example. A Polymarket proposal is submitted within an optimistic-oracle process; a Kalshi determination is venue-administered. They can occupy a comparable role as the first recorded result while differing in authority, access, and challenge process. A valid comparison states both the common role and the institutional difference.

\subsection{Evidence inputs}\label{subsec:evidence-inputs-native}
The comparison uses empirical native evidence on both sides, but the relevant uses remain asymmetric. Polymarket Parts I and II are published SSRN records. KMVE V1 defines Kalshi population and hierarchy, while the Kalshi lifecycle evidence supplies qualified endpoint and exact-clock facts. Paper 7.3 specifies the source/decidability record; no cross-venue matched-event registry is reported here.

\begin{table}[H]
\centering\footnotesize
\caption{Evidence inputs and their scientific uses in Paper 8.}
\label{tab:authority-status}
\begin{tabularx}{\textwidth}{@{}L{0.22\textwidth}L{0.23\textwidth}Y@{}}
\toprule
Layer & Evidence input & Scientific use \\
\midrule
Polymarket oracle/adapter & Published SSRN Part I & Native state definitions, denominators, path and observation-grade inputs \\
Polymarket protocol/redemption & Published SSRN Part II & Protocol finality, payout, redemption, censoring, and value-boundary inputs \\
Kalshi population/hierarchy & KMVE V1 & Native population, parent hierarchy, activity snapshot, and observation-surface inputs \\
Kalshi lifecycle/finality & Qualified Kalshi lifecycle evidence & Native cohorts, exact endpoint union, and exact determination-to-endpoint distributions; path, revision, and timer claims remain separately graded \\
Kalshi source/decidability & Paper 7.3 decidability design & Source/decidability comparisons only on a separately defined blind subset \\
Cross-venue match registry & No matched-event registry reported & No paired numerical claim \\
\bottomrule
\end{tabularx}
\end{table}

\section{Functional Finality Vector and Comparison Certificate}\label{sec:functional}

Let $v\in\{P,K\}$ index Polymarket and Kalshi. For a venue-native contract $i$, define the functional finality vector
\begin{equation}
\Phi_{i,v}=\left(
 t^{\mathrm{src}}_{i,v},
 t^{\mathrm{dec}}_{i,v},
 t^{\mathrm{close}}_{i,v},
 t^{\mathrm{first}}_{i,v},
 \mathcal I^{\mathrm{contest}}_{i,v},
 t^{\mathrm{mech}}_{i,v},
 t^{\mathrm{venue}}_{i,v},
 t^{\mathrm{cash}}_{i,v}
\right).
\label{eq:finality-vector}
\end{equation}
A coordinate can be absent when the mechanism has no distinct public state or when the evidence does not identify it.

\begin{definition}[Coordinate observation grade]\label{def:grade}
For coordinate $k$ of $\Phi_{i,v}$, let
\[
G_{i,v,k}\in\{E,I,S,Prx,U,X\},
\]
where $E$ is exact, $I$ interval-observed, $S$ snapshot-only, $Prx$ an explicit proxy, $U$ unmeasured, and $X$ conflicting. Right censoring is recorded separately from grade because an exact risk-set origin can coexist with an unobserved endpoint. For contestability, grades attach to each on/off endpoint and event mark; a set-level summary never upgrades its weakest load-bearing component.
\end{definition}

\begin{definition}[Functional endpoint map]\label{def:endpoint-map}
A functional endpoint map $M_v$ assigns a venue-native state or event to one coordinate of Equation~\eqref{eq:finality-vector} together with: (i) its economic role, (ii) its native authority, (iii) its observation grade, and (iv) the conditions under which the assignment is valid. The map does not rename the native state.
\end{definition}

\begin{definition}[Contestability episodes and occupation]\label{def:contestability-episodes}
For market $i$ on venue $v$, let
\[
\mathcal I^{\mathrm{contest}}_{i,v}
 =\left\{[t^{\mathrm{on}}_{i,v,r},t^{\mathrm{off}}_{i,v,r})\right\}_{r=1}^{R_{i,v}}
\]
be the set of publicly identifiable challenge or review episodes under the venue-native rules. An episode begins when an ordinary challenge/review opportunity becomes active or is first observed. Its closing endpoint is identified when that opportunity closes, is resolved, or is reset; if the observation window ends first, the closing endpoint remains right-censored. Its occupation is
\[
\Omega^{\mathrm{contest}}_{i,v}
 =\sum_{r=1}^{R_{i,v}}
   \left(t^{\mathrm{off}}_{i,v,r}-t^{\mathrm{on}}_{i,v,r}\right),
\]
with right-censored terms retained as intervals. A dispute, amendment, reset, or re-determination is a mark within the episode history; it is not by itself the complete contestability interval. If an event-time analysis requires one scalar anchor, it uses the explicitly labelled first activation $t^{\mathrm{contest,on}}_{i,v,1}$, never an unspecified generic contestability timestamp.
\end{definition}

A rule-defined timer can establish $t^{\mathrm{on}}$ at the first result only when the applicable rule version and timer identity are archived. When only a dispute or amendment mark is observed, the episode receives the corresponding exact or interval grade rather than a fabricated opening or closing time. Multiple dispute/reset or determination/amendment cycles remain separate episodes. The episode set is coded empty only when the applicable rule history establishes that no ordinary challenge/review opportunity exists; absence of an observed dispute mark is insufficient.

\begin{definition}[Comparison certificate]\label{def:certificate}
For candidate pair $(i_P,i_K)$ and duration coordinates $(a,b)$, a comparison certificate is
\[
\mathcal C_{i_P,i_K}^{a,b}=
\left(T^{\mathrm{sem}},M_P,M_K,G^{a,b},W^{\mathrm{cal}},R^{\mathrm{cens}},U^{\mathrm{inf}}\right),
\]
where $T^{\mathrm{sem}}$ is the semantic-match tier, $G^{a,b}$ the four-tuple of endpoint grades for $(a,b)$ on both venues, $W^{\mathrm{cal}}$ calendar-support class, $R^{\mathrm{cens}}$ the common censoring rule, and $U^{\mathrm{inf}}$ the inference unit or cluster. A paired duration is publication-admissible only when its certificate meets the stated conditions for that estimand.
\end{definition}

\begin{figure}[t]
\centering
\begin{tikzpicture}[node distance=5mm and 5mm]
\node[pwide,text width=29.5mm] (s) {Source publication\\$t^{\mathrm{src}}$};
\node[pwide,text width=29.5mm,right=of s] (d) {Contractual decidability\\$t^{\mathrm{dec}}$};
\node[pwide,text width=29.5mm,right=of d] (c) {Trading close\\$t^{\mathrm{close}}$};
\node[draw,dashed,rounded corners=3pt,fit=(s)(d)(c),inner sep=4mm,
      label={[font=\scriptsize,align=center]above:unordered analytical coordinates; relative order is event-specific}] (pre) {};

\node[pwide,below=17mm of s] (f) {First recorded result\\$t^{\mathrm{first}}$};
\node[pwide,text width=29.5mm,right=of f] (m) {Final\\mechanism result\\$t^{\mathrm{mech}}$};
\node[pwide,text width=29.5mm,right=of m] (v) {Venue / protocol\\final\\$t^{\mathrm{venue}}$};
\node[pwide,text width=29.5mm,right=of v] (h) {Cash realization\\$t^{\mathrm{cash}}$};
\foreach \a/\b in {f/m,m/v,v/h}{\draw[parr] (\a)--(\b);}

\node[psoft,text width=66mm,below=12mm of $(f)!0.5!(m)$] (x)
 {Contestability episodes / state occupation\\$\mathcal I^{\mathrm{contest}}=\{[t^{\mathrm{on}}_r,t^{\mathrm{off}}_r)\}_r$, $\Omega^{\mathrm{contest}}$};
\draw[dashed,line width=0.65pt] (f.south) -- (x.north west);
\draw[dashed,line width=0.65pt] (m.south) -- (x.north east);

\end{tikzpicture}
\caption{Common functional finality representation. Source publication, contractual decidability, and trading close are shown as an unordered analytical row: their relative order is event- and rule-specific. Downstream arrows encode functional dependence when the corresponding states are distinct, while contestability is represented by one or more graded episodes and their occupation time rather than by a single universal timestamp.}
\label{fig:finality-vector}
\end{figure}

\newpage
\subsection{Calendar-support classes}\label{subsec:calendar-support}

Calendar support is necessary because the two upstream populations are not automatically contemporaneous. Define:
\begin{itemize}
  \item $W=A$: both venue histories for the same economic event are observed under compatible path-level evidence within a common calendar window;
  \item $W=B$: the economic event is matched, but one venue has only terminal or interval-qualified evidence for one or more required coordinates;
  \item $W=C$: the venue populations are from non-overlapping windows or different event sets and support only standardized unpaired analysis;
  \item $W=X$: calendar or population identity is conflicting.
\end{itemize}

The prospective Kalshi lifecycle cohort begins after the published Polymarket follow-up boundary. It is therefore $W=C$ relative to the published Polymarket population unless a separate concurrent Polymarket cohort exists. A bounded historical Kalshi tranche can generate overlap candidates, but terminal historical records do not become full lifecycle histories merely because their dates overlap.

\subsection{Coordinate deletion and reduced vectors}\label{subsec:reduced-vector}

Not every comparison needs all eight coordinates. Let $J\subseteq\{1,\ldots,8\}$ and write $\Phi^J$ for the projection onto coordinates in $J$. A reduced comparison is admissible when every omitted coordinate is outside the estimand and the interpretation does not rely on it. For example, a comparison of first-result-to-venue-final duration does not require member cash time. By contrast, a claim about total capital lock cannot delete cash realization if the obligation remains open until cash.

This rule prevents a common form of hidden imputation: dropping a missing coordinate while retaining the economic interpretation that required it.

\subsection{Native clock graphs}\label{subsec:native-graphs}

The representation is not assumed to be a simple strict chain. Source publication can occur before or after close. A rule correction can revise the effective evidence set. Polymarket protocol finality and adapter terminal emission can share one transaction with an order identified only by log index. Kalshi determination and finalization can share one-second timestamp precision while receive order is not valid-time order. The analytical object is therefore a directed, partially ordered clock graph with set-valued contestability episodes rather than a sequence inferred from row order.

\section{Comparability and Non-Identification Results}\label{sec:formal}

\begin{theorem}[Endpoint-aligned paired comparability]\label{thm:endpoint-alignment}
Let $(i_P,i_K)$ be a semantically admissible pair and define venue-specific durations
\[
D_{i,v}^{a,b}=t_{i,v}^{b}-t_{i,v}^{a}.
\]
The paired contrast
\[
\Delta_i^{a,b}=D_{i,K}^{a,b}-D_{i,P}^{a,b}
\]
has a common functional interpretation only if: (i) $M_P(a)$ and $M_K(a)$ perform the same registered start function; (ii) $M_P(b)$ and $M_K(b)$ perform the same registered end function; (iii) endpoint grades meet the estimand-specific minimum; (iv) calendar support and censoring are compatible; and (v) the pair has the same payoff functional over the registered state space.
\end{theorem}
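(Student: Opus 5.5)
The plan is to prove the theorem in contrapositive form, one condition at a time. The claim is a set of necessary conditions ("only if"), so for each of (i)--(v) I will build a pair of venue histories that satisfies the other four conditions but violates that one. In each such pair, $\Delta_i^{a,b}$ must fail to have a common functional interpretation. To make this precise, I first fix what \emph{common functional interpretation} means. I will say $\Delta_i^{a,b}$ has one when it is invariant under every relabelling of native states that preserves the functional endpoint maps of Definition~\ref{def:endpoint-map}, the grades of Definition~\ref{def:grade}, and the certificate coordinates of Definition~\ref{def:certificate}. Equivalently, two observationally admissible histories with the same registered functional content must give the same value of the contrast. Non-interpretability then means exhibiting two histories that agree on all the registered functional content yet give different values of $\Delta_i^{a,b}$, or a value that is undefined.

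\textbf{Conditions (i) and (ii).} Take a Polymarket history where $M_P(a)$ is the first Oracle proposal and a Kalshi history where $M_K(a)$ is, say, close. Shifting $t^{\mathrm{close}}_{i,K}$ while holding first determination fixed changes $\Delta_i^{a,b}$, even though the registered start function is unchanged; by symmetry the same works for the end coordinate. This uses the unordered analytical row of Figure~\ref{fig:finality-vector}: the relative order of close and the other coordinates is event-specific, so the shift is admissible.

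\textbf{Condition (iii).} If one of the four endpoint grades is $I$, $S$, or $Prx$, every point of the admissible interval, or every proxy-consistent time, is compatible with the data. So $\Delta_i^{a,b}$ is set-valued, not a scalar, and any choice such as a midpoint is a hidden imputation. If the grade is $U$ or $X$, the contrast is undefined or conflicting.

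\textbf{Condition (iv).} Use the $W=C$ construction: two event sets from non-overlapping windows. A composition shift alone changes the contrast, so it is not a within-event mechanism difference. Separately, if the two venues use different censoring rules $R^{\mathrm{cens}}$, the same latent durations produce different observed risk sets.

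\textbf{Condition (v).} Exhibit two contracts that agree on the realized answer but differ on some unrealized state, for example a threshold or postponement clause. Their endpoint $t^{\mathrm{dec}}$, and hence $t^{\mathrm{first}}$, can legitimately differ because decidability depends on the payoff functional over admissible states. A difference in $\Delta_i^{a,b}$ then reflects contract design, not mechanism.

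The main obstacle is that the statement is partly definitional: "common functional interpretation" is not formalized earlier in the paper. The proof's strength depends on the invariance formalization chosen above, so I would state that formalization explicitly as a convention before the five counterexamples. I would also note that the conditions are claimed to be necessary, not sufficient, so no converse is needed.
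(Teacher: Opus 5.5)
Your overall route matches the paper's: take one failing condition at a time and show that each destroys the common meaning of $\Delta_i^{a,b}$. The gap is in the invariance convention you make load-bearing, which does not support all five cases.

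Your first formulation preserves only the endpoint maps, grades and certificate coordinates as ``registered functional content''. Under that reading, take a fully aligned pair (Tier A, exact grades, $W=A$) and two histories that differ only in how long Kalshi took. They agree on content but give different $\Delta_i^{a,b}$, so no contrast is ever interpretable and the theorem holds only vacuously.

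Your (i) witness avoids this by also holding the registered endpoint times fixed (``holding first determination fixed''). But once those four timestamps belong to the content, $\Delta_i^{a,b}$ is a deterministic function of them whenever (i)--(iii) hold. That is true whatever the calendar class, censoring rule or payoff functional, so no witness of your type exists for (iv) or (v). Your arguments there (``a composition shift changes the contrast'', ``reflects contract design, not mechanism'') are about what the number measures, not about invariance, so they quietly leave your own convention.

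The claimed equivalence with relabelling invariance also fails. Relabelling native states moves no timestamp, whereas your (i) witness moves $t^{\mathrm{close}}_{i,K}$.

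The paper runs the witness the other way: any single failure permits two histories with \emph{identical} numerical durations but different economic meanings, so the number does not identify a common quantity. That direction covers (iv) and (v) directly. A $W=C$ pair, or a semantically admissible Tier-B pair differing only in a threshold or postponement clause, can reproduce exactly the $\Delta_i^{a,b}$ of a genuine within-event, same-payoff pair, while its value reflects composition or contract design. Alternatively, take as target the within-event, same-payoff mechanism contrast over the registered interval, and show that under each failure it is undefined or differs from $\Delta_i^{a,b}$.

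Smaller points:
\begin{itemize}
\item In (iii) the minimum is estimand-specific. $I$ grades meet it for interval-qualified pairs; only the scalar contrast needs $E$ at all four endpoints.
\item In (i), the unordered analytical row concerns source publication, decidability and close among themselves. You need only that close can move while first determination stays fixed, which holds because they are distinct states.
\item In (v), a shift in $t^{\mathrm{dec}}$ does not entail one in $t^{\mathrm{first}}$. The relevant fact is the paper's: the two contracts need not become decidable or final under the same real-world states.
\end{itemize}
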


\begin{proof}
If either endpoint performs a different function, then the two durations integrate time over different institutional intervals. If grades differ incompatibly, one duration can be point-observed while the other represents a bound or terminal snapshot. If censoring or calendar support differs, selection into the observed pair can depend on venue-specific follow-up rather than mechanism duration. If payoff functions differ, the contracts need not become decidable or final under the same real-world states. Any one failure permits two data-generating histories with identical numerical durations but different economic meanings, so the scalar contrast is not identified as a common quantity.
\end{proof}

\begin{proposition}[Terminal coarsening]\label{prop:terminal-coarsening}
A mapping from a complete finality history to its terminal state, terminal value, and terminal timestamp is generally non-injective. Consequently, terminal-only matched records cannot identify path complexity, contestability frequency, or time spent in intermediate states.
\end{proposition}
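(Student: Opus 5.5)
The plan is to prove non-injectivity by exhibiting two complete finality histories that share the same terminal image but differ in the path quantities named in the statement; the consequence then follows because no functional of the terminal image can separate them. Formally, I would model a complete history $h$ on venue $v$ as a finite marked sequence of native states with timestamps, together with the contestability episode set $\mathcal I^{\mathrm{contest}}_{i,v}$ of Definition~\ref{def:contestability-episodes}. The coarsening map is
\[
\tau(h)=\bigl(s_{\mathrm{term}}(h),\,x_{\mathrm{term}}(h),\,t_{\mathrm{term}}(h)\bigr),
\]
where $s_{\mathrm{term}}$ is the terminal state, $x_{\mathrm{term}}$ the terminal value or payout, and $t_{\mathrm{term}}$ the terminal timestamp. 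The word ``generally'' will be read as: for any venue whose rules admit at least one non-trivial challenge or review opportunity, $\tau$ is not injective on the set of rule-admissible histories.

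\textbf{Construction.} Take $h_1$ as the direct path: first result at $t^{\mathrm{first}}$, no episode marks, and terminal state at $t^{\mathrm{venue}}$ with value $x$. Take $h_2$ with the same $t^{\mathrm{first}}$, then an episode $[t^{\mathrm{on}},t^{\mathrm{off}})$ with $t^{\mathrm{first}}\le t^{\mathrm{on}}<t^{\mathrm{off}}\le t^{\mathrm{venue}}$. This episode carries a dispute/reset on Polymarket, or an amendment/re-determination on Kalshi. After it, a new first result appears, and the history reaches the same terminal state, the same value $x$, and the same timestamp $t^{\mathrm{venue}}$. Then $\tau(h_1)=\tau(h_2)$. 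Yet the two histories differ on all three path quantities: the episode count $R$ ($0$ versus $1$), the occupation $\Omega^{\mathrm{contest}}$ ($0$ versus $t^{\mathrm{off}}-t^{\mathrm{on}}>0$), and the time spent in intermediate states. A variant with two episodes against one, with equal total span, shows that even the frequency is not identified when occupation happens to coincide.

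\textbf{Consequence.} Any estimator computed from terminal-only matched records is a measurable function $g\circ\tau$. It therefore returns the same value on $h_1$ and $h_2$. Hence it cannot equal a path functional $\psi$ such as $R$, $\Omega^{\mathrm{contest}}$, or an intermediate sojourn time, for which $\psi(h_1)\neq\psi(h_2)$. This is the standard observational-equivalence argument for non-identification. Matching does not help: pairing a Polymarket and a Kalshi record applies $\tau$ coordinatewise, so the product map is still non-injective.

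\textbf{Main obstacle.} The hardest step is showing that $h_2$ is rule-admissible, that is, that a contested path really can end at the same terminal time and value as an uncontested one under the native rules. The first attempt would be to cite the venue-native studies, which record dispute/reset generations and amendment paths ending in ordinary terminal states, and to set $t^{\mathrm{venue}}$ by choosing the uncontested path's delay equal to the contested one. If fixed timers make exact timestamp equality impossible, I would fall back on one-second or block-level timestamp precision. Equal-second and same-block terminal records make the coincidence observationally exact, as noted in the discussion of native clock graphs.
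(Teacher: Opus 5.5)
Your construction is essentially the paper's own: a direct first-result-to-finality history against one with a dispute, reset or amendment that restores the same value by the same terminal timestamp. Your $g\circ\tau$ observational-equivalence step and the coordinatewise remark on matched records make explicit what the paper's proof leaves implicit.

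One repair is needed. Under Definition~\ref{def:contestability-episodes}, and under your own hypothesis that the rules create a challenge opportunity, the direct history $h_1$ still has a nonempty episode set (its liveness or settlement-timer window). So you should separate $h_1$ from $h_2$ by dispute/amendment marks, number of recorded results, and time in the disputed state, rather than by $R=0$ and $\Omega^{\mathrm{contest}}=0$. Also drop the one-second-precision fallback: it cannot absorb a timer-length gap, and it is unnecessary because your delay-matching choice (or a later first result for $h_1$, which $\tau$ does not see) already equalizes the terminal timestamps.
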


\begin{proof}
Construct two histories with the same terminal value and timestamp. One proceeds directly from first result to finality. The other contains a dispute, reset or amendment, followed by restoration of the same final value before the same terminal timestamp. Their terminal projections coincide while their paths and state-occupation times differ.
\end{proof}

\begin{proposition}[Interval sign non-identification]\label{prop:interval-sign}
Suppose $D_{i,K}\in[\ell_K,u_K]$ and $D_{i,P}\in[\ell_P,u_P]$. Then
\[
\Delta_i\in[\ell_K-u_P,\;u_K-\ell_P].
\]
If this interval contains zero, the direction of the venue contrast is not identified. Midpoint subtraction can assign a sign not supported by the evidence.
\end{proposition}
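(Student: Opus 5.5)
The argument has two parts: a containment claim, which is interval arithmetic, and a non-identification claim, which needs a construction. For containment, I would note that $\Delta_i=D_{i,K}-D_{i,P}$ is increasing in $D_{i,K}$ and decreasing in $D_{i,P}$ on the product rectangle $[\ell_K,u_K]\times[\ell_P,u_P]$. Hence
\[
\ell_K-u_P\le D_{i,K}-D_{i,P}\le u_K-\ell_P
\]
for every admissible pair. The bounds are sharp: the corners $(\ell_K,u_P)$ and $(u_K,\ell_P)$ attain them. Since the rectangle is connected and the difference map is continuous, every intermediate value is attained as well. The identified set for $\Delta_i$ is therefore exactly the stated interval, not merely contained in it. This assumes the evidence places no restriction tying the two durations together beyond their marginal intervals. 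That is the natural reading when the two venues are observed through separate observation processes, as recorded by the separate grades $G^{a,b}$ in Definition~\ref{def:certificate}.

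For non-identification of the sign, suppose $\ell_K-u_P\le 0\le u_K-\ell_P$. I would exhibit two data-generating histories consistent with the same interval evidence. In the first, $(D_{i,K},D_{i,P})=(\ell_K,u_P)$, so $\Delta_i\le 0$. In the second, $(D_{i,K},D_{i,P})=(u_K,\ell_P)$, so $\Delta_i\ge 0$. When zero is interior to the interval, both inequalities are strict and the signs genuinely differ. Both histories produce identical observations, since each venue's duration lies in its reported interval, so no function of the data can recover the sign. This mirrors the observational-equivalence logic of Proposition~\ref{prop:terminal-coarsening}.

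The boundary case, where zero is an endpoint of the interval, is the step I expect to need care. There the sign is weakly identified as non-negative or non-positive, but whether $\Delta_i=0$ or $\Delta_i$ is strictly signed is still not identified, provided the interval is non-degenerate. I would state "direction not identified" in this weak sense and treat the degenerate point case separately.

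For the midpoint remark, I would give an explicit counterexample. Take $[\ell_K,u_K]=[0,10]$, whose midpoint is $5$, and $[\ell_P,u_P]=[3,5]$, whose midpoint is $4$. Midpoint subtraction gives $+1>0$. Yet the true pair $(D_{i,K},D_{i,P})=(0,5)$ is consistent with the evidence and gives $\Delta_i=-5<0$. Midpoint subtraction therefore assigns a sign that the evidence does not support.

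The main obstacle is not the arithmetic but justifying that the rectangle itself is the identified set. One has to argue that the grades rule out cross-venue constraints, such as shared clocks or common censoring, that would shrink it. I would rely on the certificate's separation of venue-specific grades and note that any such additional coupling would only narrow the set, so it must be modelled explicitly rather than assumed.
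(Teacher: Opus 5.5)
Your proposal is correct and follows the paper's own argument: the extremal corners $(\ell_K,u_P)$ and $(u_K,\ell_P)$ give the bounds, and when zero lies in the resulting set both venue orderings remain compatible with the evidence. Your additions make explicit what the paper's three-sentence proof leaves implicit: sharpness via connectedness, the no-coupling assumption hidden in the paper's word ``feasible'', the boundary case, and the concrete midpoint counterexample.
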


\begin{proof}
The smallest feasible contrast combines the smallest Kalshi duration with the largest Polymarket duration; the largest feasible contrast combines the reverse endpoints. Any midpoint is an additional location model inside both intervals. When zero lies in the identified set, both venue orderings remain compatible with the observations.
\end{proof}

\begin{proposition}[Calendar-overlap insufficiency]\label{prop:calendar}
Exact semantic equivalence of two contract specifications does not by itself identify a paired mechanism contrast when their lifecycle evidence is observed in non-overlapping venue populations. In that case the comparison is unpaired unless an exact common event is observed on both venues with compatible histories.
\end{proposition}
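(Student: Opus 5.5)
The plan follows the non-identification template already used for Theorem~\ref{thm:endpoint-alignment} and Proposition~\ref{prop:terminal-coarsening}. I would exhibit two data-generating processes that agree on everything the non-overlapping evidence reveals but disagree on the paired contrast. First I fix the setting. A Tier-A pair $(i_P,i_K)$ has identical payoff functionals, so condition (v) of Theorem~\ref{thm:endpoint-alignment} holds. The Polymarket lifecycle evidence, however, comes from a population $\mathcal P_P$ observed on window $[s_P,e_P]$, and the Kalshi evidence from $\mathcal P_K$ on $[s_K,e_K]$, with $e_P<s_K$, which is the $W=C$ case of Section~\ref{subsec:calendar-support}. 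The paired estimand is $\Delta_i^{a,b}$ evaluated on a single economic event $e$ that is realized on both venues. What the evidence identifies is only the two marginal laws: the law of $D_{\cdot,P}^{a,b}$ over events realized in $\mathcal P_P$, and the law of $D_{\cdot,K}^{a,b}$ over events realized in $\mathcal P_K$.

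The key step is the construction. Semantic equivalence is a property of specifications, so the same specification can be instantiated by different real-world events in the two windows; a monthly release with the same rule text in different months is an example. I would define two models that share the same marginal duration laws in each window. In the first model, mechanism durations for a common event would be positively coupled across venues. In the second, they would be independent or counter-monotone. The marginals can be held fixed while the joint varies, by Fréchet--Hoeffding, so the law of $\Delta$ on common events ranges over a set of values and is not pinned down. A second, simpler construction covers the case where one venue has a time-varying regime. Let Kalshi durations depend on a calendar-indexed regime $\theta_t$, for instance an operational backlog. Two models could then agree on $\theta$ inside $[s_K,e_K]$ but disagree about what it would have been during $[s_P,e_P]$. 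The contrast ``same event, both mechanisms'' then differs between the models while every observed record is identical. Either construction shows that exact semantic equivalence does not identify the paired contrast.

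The second clause is the ``unless'' direction. When an exact common event $e$ is observed on both venues with compatible histories, both durations are drawn under the same real-world realization. The coupling and regime ambiguities then vanish, so the remaining conditions reduce to those of Theorem~\ref{thm:endpoint-alignment}. Otherwise only a standardized unpaired contrast of the marginals is supported. The main obstacle is stating precisely what ``observed'' conditions on, so that the observationally equivalent models are legitimate and the argument is not vacuous. I would formalize the observable as the pair of window-restricted marginal laws together with the specification registry, and verify that both constructed models reproduce this object exactly. This step requires a short check that the censoring rules within each window are also unchanged.
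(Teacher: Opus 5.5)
Your argument is correct, but it takes a more formal route than the paper.

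\textbf{The paper's route.} The paper's proof is a short conceptual argument. Semantic equivalence fixes the payoff mapping, not realized pair membership. When the venue populations concern different events or calendar periods, observed duration differences can reflect event composition, operational regime, or observation design. Without a common event identity, or a standardization model for the unpaired question, the within-pair contrast is simply undefined as a data object.

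\textbf{Your route.} You fix an explicit observable: the window-restricted marginal laws plus the specification registry, with censoring held fixed. You then exhibit observationally equivalent models that disagree on the paired estimand. This turns ``undefined'' into a genuine non-identification statement.

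\textbf{What each construction does.} Your two constructions do different work, and it is worth saying so. The coupling construction shows that, even if the marginal duration laws transported perfectly to a common-event population, the law of $\Delta_i^{a,b}$ is not pinned down. That covers the medians, quartiles, sign shares and tail shares that Section~\ref{subsec:paired-estimands} treats as the primary summary. It needs non-degenerate marginals, since with point masses the coupling is unique. Under such transport, however, the mean paired contrast equals the difference of marginal means and \emph{would} be identified. It is your regime construction, matching the paper's ``operational regime'' channel, that defeats the mean as well.

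\textbf{What each approach buys.} The paper's version is brief, names composition and observation design as further channels, and points to Proposition~\ref{prop:decomposition} as the fallback. A standardized contrast needs declared strata and reference weights, not just the two marginals. Your version states precisely what is conditioned on and gives a sharper, distribution-level conclusion.

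\textbf{The ``unless'' clause.} Neither proof needs sufficiency, and Theorem~\ref{thm:endpoint-alignment} gives only necessary conditions. So you should claim only that a directly observed common event removes the coupling and regime ambiguities, not that identification then follows.
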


\begin{proof}
Semantic equivalence identifies the payoff mapping, not the realized pair membership. If the venue populations concern different events or different calendar periods, the observed durations can differ because of event composition, operational regime, or observation design. Without a common event identity or a standardization model, the within-pair contrast is undefined.
\end{proof}

\begin{proposition}[Complete-case ranking reversal]\label{prop:complete-case}
Let $R_v$ indicate that both required clocks are observed on venue $v$. A comparison of $\E[D_v\mid R_v=1]$ can reverse the ranking of $\E[D_v]$ when clock observability depends differently on duration, path, or event type across venues.
\end{proposition}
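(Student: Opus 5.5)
The statement says the ranking \emph{can} reverse, so it suffices to exhibit a two-venue data-generating process in which $\E[D_K]>\E[D_P]$ but $\E[D_K\mid R_K=1]<\E[D_P\mid R_P=1]$, and then to show that the mechanism is generic rather than knife-edge. The natural construction is a two-type mixture, Simpson's-paradox style. Let each venue's contracts fall into a ``fast'' class with duration $d_f$ and a ``slow'' class with duration $d_s>d_f$; the slow class can be read as contested paths or hard event types. Give both venues an identical duration law, or one in which Kalshi is slightly slower, say with slow-class shares $\pi_K>\pi_P$. Let observability differ by class across venues. On Kalshi, slow paths lose one of the required clocks, so $\Pr(R_K=1\mid\text{slow})$ is small; this matches the paper's remark that exact lifecycle paths can be sparse while REST endpoints are complete. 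On Polymarket, observability is class-independent.

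The steps I would carry out are these. First, write $\E[D_v]=\pi_v d_s+(1-\pi_v)d_f$. Second, by Bayes' rule,
\[
\E[D_v\mid R_v=1]=\frac{\pi_v\rho_{v,s}d_s+(1-\pi_v)\rho_{v,f}d_f}{\pi_v\rho_{v,s}+(1-\pi_v)\rho_{v,f}},
\]
where $\rho_{v,c}=\Pr(R_v=1\mid c)$. Third, fix $\rho_{P,s}=\rho_{P,f}$, so that the Polymarket complete-case mean equals its marginal mean. Then let $\rho_{K,s}\to0$, which drives the Kalshi complete-case mean to $d_f<\E[D_P]$. Continuity gives an open set of parameters with strict reversal. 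A concrete numerical instance is $d_f=1$, $d_s=10$, $\pi_K=0.5$, $\pi_P=0.4$, $\rho_{K,s}=0.05$, $\rho_{K,f}=\rho_{P,\cdot}=1$. This gives marginal means $5.5>4.6$ and complete-case means of approximately $1.43<4.6$.

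Fourth, I would note that the same construction works whether the selection runs through duration itself (via $\rho$ depending on $d$), through path (contested vs.\ direct, which ties to Proposition~\ref{prop:terminal-coarsening}), or through event type. This covers all three channels named in the statement. Finally, observe that equal observability mechanisms across venues ($\rho_{K,\cdot}=\rho_{P,\cdot}$ and class-independent) rule out this reversal. This shows that the differential dependence named in the hypothesis is the operative condition.

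The computation is routine. The main obstacle is interpretive. I need to make precise that ``can reverse'' is an existence claim over admissible data-generating processes consistent with the observed data. I also need to ensure that the constructed observability pattern is not excluded by the certificate conditions of Theorem~\ref{thm:endpoint-alignment}: the endpoints should be functionally aligned and exactly graded whenever observed, so that the reversal comes purely from missingness. I would handle this by stating that conditional on $R_v=1$ both clocks have grade $E$. The failure then lies solely in the selection $R_v$, which the certificate's censoring component $R^{\mathrm{cens}}$ is designed to flag.
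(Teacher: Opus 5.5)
Your construction is correct and is essentially the paper's argument: a two-stratum (short/long) mixture in which venue-specific observability reweights the strata, so the complete-case means rank $K$ faster while the full-population means rank $P$ faster. Your numerical instance checks out ($5.5>4.6$ marginally, about $1.43<4.6$ complete-case). The only difference is cosmetic: the paper's sketch makes $P$ faster within each stratum, whereas you equalize within-stratum durations and let composition ($\pi_K>\pi_P$) drive the marginal ranking, and both suffice for the existence claim.
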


\begin{proof}
Choose two strata, short and long, and let venue $P$ observe mostly long cases while venue $K$ observes mostly short cases even though within each stratum $P$ is faster. The complete-case mixture can then rank $K$ faster. Differential observation changes mixture weights, so the ranking is not invariant without a missingness assumption or common-support adjustment.
\end{proof}

\begin{proposition}[Governance-unit non-equivalence]\label{prop:governance-unit}
Counts of proposers, disputers, execution addresses, venue determinations, or review actions are not directly comparable governance concentration measures unless the unit of authority and denominator are aligned.
\end{proposition}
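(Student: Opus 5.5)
The proof I would give follows the pattern of Propositions~\ref{prop:terminal-coarsening} and~\ref{prop:complete-case}: a non-identification argument by explicit construction. The claim to establish is negative. Without aligned authority units and denominators, raw counts do not define a common concentration functional. So it suffices to exhibit two institutional configurations that produce identical count profiles but different concentrations of decision authority, and conversely configurations with identical authority structure but different counts.

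\textbf{First step: define the concentration measure.} I would formalize a governance concentration measure as a functional $H(\cdot)$, for instance a Herfindahl index or top-$k$ share. It is evaluated on a distribution of \emph{authority weights} over \emph{authority units}, normalized by a denominator of decisions over which authority is exercised. On Polymarket the counted objects are proposer, disputer and execution addresses, and their denominator is request generations or conditions. On Kalshi the counted objects are venue determinations or review actions, attributed to a single administering venue, with markets or events as the denominator.

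\textbf{Second step: a unit mismatch.} An address count is not an authority-unit count, since one entity can control many addresses and one address can relay for many principals (beneficial ownership is explicitly out of scope, \cref{subsec:scope}). I would construct two histories with the same number of distinct proposer addresses and the same per-address action counts. In one, each address is an independent principal. In the other, all addresses are controlled by one principal. Observed counts coincide while authority concentration ranges from minimal to maximal. On the Kalshi side, many recorded determinations can come from one institutional authority. Counting actions therefore measures activity volume, not the dispersion of authority. The same two-history device used in the proof of Proposition~\ref{prop:terminal-coarsening} shows that the count-to-concentration map is non-injective.

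\textbf{Third step: a denominator mismatch.} Holding units fixed, I would show that normalizing by different denominators changes the ranking. Examples are per request generation versus per condition, or per child market versus per event key, with the child-ticker multiplicity discussed in the ninth contribution. I would use a mixture construction as in Proposition~\ref{prop:complete-case}. Suppose one venue's decisions are replicated across many child objects sharing one real-world event. Then per-object counts inflate its apparent activity and dilute its share measures, and the cross-venue ordering of $H$ can reverse relative to the event-level denominator.

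The main obstacle I expect is making ``unit of authority aligned'' precise enough that the \emph{if} direction is not vacuous. I would handle this by stating the result only as the necessity (``not directly comparable unless'') shown by the constructions above, and treating alignment as a registered certificate coordinate analogous to $U^{\mathrm{inf}}$ in Definition~\ref{def:certificate}, without claiming sufficiency.
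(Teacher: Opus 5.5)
Your proposal is correct and follows essentially the paper's route. The paper's proof is the same non-injectivity argument stated informally: an address can relay or hold custody for several actors, a venue action can aggregate several internal reviewers, and a market can generate several visible actions, so equal counts need not mean equal decision-maker concentration. Your explicit functional $H$, two-history construction and denominator-reversal step formalize those three observations, and like the paper you claim only necessity.

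One small correction to the denominator step: you say replication across child objects will ``dilute'' shares, but uniform replication leaves share-based measures such as $H$ unchanged. The reversal therefore needs replication that is uneven across authority units, and uneven replication can raise concentration as easily as lower it.
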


\begin{proof}
An address can be a relayer or custodian for several actors; a venue action can aggregate several internal reviewers; and a market can generate several externally visible actions. Equal counts therefore need not represent equal decision-maker concentration. A comparable concentration measure requires a common governance unit or separate native metrics with explicit interpretation.
\end{proof}

\begin{proposition}[Unpaired composition decomposition]\label{prop:decomposition}
Let $X$ denote pre-specified comparable strata and $\mu_v(x)=\E[D_v\mid X=x]$. For any reference distribution $w(x)$,
\[
\E[D_K]-\E[D_P]
=
\sum_x w(x)\{\mu_K(x)-\mu_P(x)\}
+
\sum_x\{p_K(x)-w(x)\}\mu_K(x)
-
\sum_x\{p_P(x)-w(x)\}\mu_P(x).
\]
The first term is a standardized mechanism-associated contrast; the remaining terms are composition components.
\end{proposition}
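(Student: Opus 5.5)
The plan is to verify the identity by an algebraic add-and-subtract argument built on the law of total expectation. The only modelling inputs are that $p_v(x)$ denotes the venue-$v$ distribution of the pre-specified strata, $p_v(x)=\Pr_v(X=x)$, and that $\mu_v(x)$ is well defined on the support where it is evaluated. As a preliminary step, I would record that $X$ is discrete (finitely or countably many strata, matching the sum notation) and that each $D_v$ is integrable, so all expectations exist and every sum converges absolutely. With these assumptions the rearrangements below are legitimate.

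\textbf{Step 1.} Write each venue mean as a mixture over strata:
\[
\E[D_v]=\sum_x p_v(x)\,\mu_v(x), \qquad v\in\{P,K\}.
\]

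\textbf{Step 2.} Inside each sum, insert the reference weights $w(x)$ by writing $p_v(x)=w(x)+\{p_v(x)-w(x)\}$. This gives
\[
\E[D_v]=\sum_x w(x)\mu_v(x)+\sum_x\{p_v(x)-w(x)\}\mu_v(x).
\]

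\textbf{Step 3.} Subtract the $P$ expression from the $K$ expression. The two $w$-weighted terms combine into $\sum_x w(x)\{\mu_K(x)-\mu_P(x)\}$, and the two remainder sums appear with signs $+$ and $-$. This is exactly the displayed decomposition. Because the identity holds for every $w$, no normalisation of $w$ is actually needed, although it is natural to take $w$ to be a probability distribution.

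The main obstacle is support rather than algebra. The identity requires $\mu_v(x)$ to be defined wherever $w(x)>0$ or $p_v(x)>0$. If some stratum has $p_K(x)=0$ but $w(x)>0$, then $\mu_K(x)$ is undefined. I would handle this by stating that $w$ must be supported on the common support of $p_P$ and $p_K$, echoing the common-support caveat of Proposition~\ref{prop:complete-case}. Alternatively, $\mu_v(x)$ can be extended arbitrarily off support, in which case the terms involving it cancel in the identity but the standardized term loses its meaning.

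For the interpretation, I would finally note that the first term is invariant to the venues' own compositions: it compares $\mu_K$ and $\mu_P$ under a single common weighting $w$. The remaining two terms vanish when $p_K=p_P=w$. This justifies calling the first term the standardized mechanism-associated contrast and the others composition components. The interpretation is descriptive and carries no causal claim, consistent with Section~\ref{sec:related}.
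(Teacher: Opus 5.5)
Your argument is correct and is essentially the paper's own proof: the paper adds and subtracts $\sum_x w(x)\mu_K(x)$ and $\sum_x w(x)\mu_P(x)$ in the two venue means and regroups, exactly as in your Steps 1--3. One small correction to your preliminaries: with countably many strata, integrability of $D_v$ controls only $\sum_x p_v(x)|\mu_v(x)|$, not $\sum_x w(x)|\mu_v(x)|$, so you need either finitely many strata (the paper's pre-specified setting) or the extra assumption that the $w$-standardized means exist.
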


\begin{proof}
Add and subtract $\sum_x w(x)\mu_K(x)$ and $\sum_x w(x)\mu_P(x)$ from the two venue means and regroup.
\end{proof}

\begin{theorem}[No venue-independent scalar ordering]\label{thm:no-total-order}
Let a finality outcome be the vector
\[
Y_v=(\text{latency},\text{contestability episodes/occupation},\text{revision},\text{observability},\text{capital lock}).
\]
If neither venue weakly dominates the other in every coordinate, no venue-independent total ordering follows without an explicitly chosen weighting or social objective.
\end{theorem}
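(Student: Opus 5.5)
I will treat the statement as a claim about the componentwise (Pareto) partial order on outcome vectors, and prove it by exhibiting admissible aggregators that disagree. Each coordinate of $Y_v$ is first oriented so that a smaller value is better: shorter latency, fewer or shorter contestability episodes, fewer revisions, less capital lock. Observability is negated so that larger observability counts as better. Write $Y_v\succeq Y_{v'}$ for weak dominance in every coordinate. The hypothesis is that neither $Y_P\succeq Y_K$ nor $Y_K\succeq Y_P$ holds. Hence there exist coordinates $j$ and $k$ with $Y_{P,j}$ strictly better than $Y_{K,j}$ and $Y_{K,k}$ strictly better than $Y_{P,k}$.

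The core step is to show that the induced ranking depends on the chosen weights. I will restrict attention to the class of weighted aggregators
\[
S_w(Y)=\sum_m w_m\,\tilde Y_m,\qquad w_m\ge 0,
\]
where $\tilde Y$ is the oriented vector. Any ordering that respects Pareto dominance must be at least as rich as this class. I then construct two weight vectors. The first puts weight $1$ on coordinate $j$ and weight $\varepsilon$ on all others. The second puts weight $1$ on coordinate $k$ and weight $\varepsilon$ on all others. For $\varepsilon$ small enough the two strict gaps $\delta_j,\delta_k>0$ dominate, since the remaining coordinates are finite numbers. The first aggregator then ranks $P$ strictly ahead of $K$, and the second ranks $K$ strictly ahead of $P$. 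Both aggregators are monotone, in the sense that they never contradict a coordinatewise improvement. They are therefore equally admissible on the data alone.

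From this I conclude that the direction of the ordering is not a function of $(Y_P,Y_K)$ alone: it is a function of $(Y_P,Y_K,w)$. That is exactly the claim that no venue-independent total ordering follows without a chosen weighting or social objective. To cover objectives beyond linear weights, I will add a remark. Any complete order that extends the Pareto order restricts, on this incomparable pair, to a choice that the Pareto order leaves open. Both choices extend it consistently, because the transitive closure of the Pareto order plus either edge $P\succ K$ or $K\succ P$ remains acyclic when only two elements are involved.

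The main obstacle is making the coordinates commensurable enough to be weighted. Contestability is set-valued under Definition~\ref{def:contestability-episodes}, observability is graded categorically under Definition~\ref{def:grade}, and right-censored terms are intervals. I would handle this by fixing an arbitrary monotone scalar encoding of each coordinate, for example occupation $\Omega^{\mathrm{contest}}$ and an ordinal grade rank. I would then note that the argument needs only the existence of one strict improvement in each direction, and that property survives any monotone encoding. Where a coordinate is interval-valued, Proposition~\ref{prop:interval-sign} shows that even the per-coordinate direction may be unidentified. That case only enlarges the set of incomparable pairs, so the conclusion is strengthened rather than weakened.
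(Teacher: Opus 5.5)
Your proposal is correct and is essentially the paper's argument: each venue is strictly better on some coordinate, so positive weight vectors can be chosen to rank either venue first, and the ordering therefore depends on the normative weights rather than on $(Y_P,Y_K)$ alone. Your explicit $\varepsilon$-weight construction and your remark on extending the Pareto order make that one-line proof precise; the only care needed is that your scalar encodings be strictly monotone, so that the two strict gaps survive the encoding.
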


\begin{proof}
When one venue is better on one coordinate and worse on another, positive weight vectors can be chosen to rank either venue first. The ordering therefore depends on the normative weights. Only coordinatewise dominance is invariant to all monotone positive-weight aggregations.
\end{proof}

\begin{proposition}[Native-state schedule non-portability]\label{prop:schedule}
A risk-control schedule triggered by a native label on one venue is not portable to another venue merely by substituting the nearest label. Portability requires the same functional state and sufficiently timely observation.
\end{proposition}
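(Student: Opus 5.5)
The plan is to argue by explicit counterexample construction, in the same style as Proposition~\ref{prop:terminal-coarsening} and Theorem~\ref{thm:endpoint-alignment}. Portability of a schedule means the following. A schedule $\sigma$ that triggers an action when native label $L_P$ is observed on Polymarket keeps the same risk meaning when it is re-keyed to the nearest Kalshi label $L_K$. I would formalize this as a requirement on two things. First, the functional coordinate of $\Phi_{i,v}$ to which the trigger state is mapped by $M_v$ (Definition~\ref{def:endpoint-map}). Second, the observation delay and grade $G_{i,v,k}$ (Definition~\ref{def:grade}) with which that state becomes visible to the controller. The proposition then reduces to showing that each of these two conditions is necessary: if either fails, some admissible pair of histories makes the substituted schedule act at a point of different economic risk.

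\textbf{Step one: functional mismatch.} I would take the first-result row of Table~\ref{tab:crosswalk}. Suppose the Polymarket trigger is adapter terminality or CTF payout recording, i.e.\ coordinate $t^{\mathrm{venue}}$ after contestability has closed. The ``nearest'' Kalshi label is instead a first determination, i.e.\ $t^{\mathrm{first}}$ with an open contestability episode $[t^{\mathrm{on}},t^{\mathrm{off}})$ from Definition~\ref{def:contestability-episodes}. Using the construction of Proposition~\ref{prop:terminal-coarsening}, I would exhibit a Kalshi history in which the determination is later amended. The schedule then releases margin or converts positions while the payoff is still revisable. On Polymarket the same schedule only acts once the payout is irreversible. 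So the same label-level rule implies different exposure to revision.

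\textbf{Step two: observation timeliness.} Even with the same functional coordinate, timeliness can fail. Take two histories whose valid times $t^{\mathrm{venue}}$ coincide. On one venue the state is exact and observed promptly. On the other it is snapshot-only or interval-graded ($S$ or $I$), or its receive order differs from valid-time order (Section~\ref{subsec:native-graphs}). The trigger then fires at an observation time that lags, or mis-orders, the underlying state. Applying Proposition~\ref{prop:interval-sign} to the lag interval shows that whether the schedule acts before or after the functional state is not identified. Together, steps one and two give the necessity of both stated conditions, and hence the non-portability claim.

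The main obstacle is making ``sufficiently timely'' precise without fixing a specific risk-control model. I would try defining it relative to the schedule's own decision grid. The observation-lag interval must lie within the tolerance within which the schedule's action is invariant. The counterexample in step two would then take a lag exceeding that tolerance.
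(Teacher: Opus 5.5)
Your proposal is correct and follows the paper's own argument. The paper's proof also makes two points. First, nearby labels can differ in whether a result exists, whether contestability remains, whether the payout rule is effective, or whether the claim is redeemable, so the control acts over a different exposure interval; this is your step one. Second, delayed observation changes that interval again; this is your step two. Together these make functional and observational alignment both necessary, and your counterexamples make each step concrete.

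One minor point concerns step two. The appeal to Proposition~\ref{prop:interval-sign} is not needed, because any differential observation lag already changes the exposure interval. As phrased it also overstates slightly: a trigger fired on observing a state cannot act before that state. What remains unidentified is only the sign of the lag difference between the native and substituted schedules, not whether the action precedes the state.
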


\begin{proof}
Two labels can differ in whether a result exists, whether contestability remains, whether the payout rule is effective, or whether the claim is redeemable. A control applied at different functional states changes the exposure interval. Delayed observation can change it again. Functional and observational alignment are therefore necessary.
\end{proof}

\newpage
\section{Payoff-Preserving Matched-Event Design}\label{sec:matching}

The primary paired unit is not a title pair. It is a pair of contracts whose payoff functionals agree over the registered admissible state space.

\begin{definition}[Contract payoff functional]\label{def:payoff}
For contract $i$ on venue $v$, let
\[
\pi_{i,v}:\Omega_i\rightarrow\mathcal Y_{i,v}
\]
map real-world states and admissible source histories into the contract's terminal payout. The domain includes states needed to evaluate threshold, timing, correction, postponement, cancellation, invalidity, and rounding clauses, not only the realized state.
\end{definition}

\begin{criterion}[Tier-A semantic admissibility]\label{crit:tier-a}
A candidate pair is Tier A only if the evidence establishes equivalence of:
\begin{enumerate}
  \item real-world proposition and polarity;
  \item unit, threshold, comparison operator, and rounding;
  \item measurement interval, cut-off, and time zone;
  \item source identity or a documented source-equivalence rule;
  \item treatment of preliminary, revised, corrected, and final releases;
  \item postponement, cancellation, invalidity, void, and exceptional-event rules;
  \item terminal payoff over all registered admissible states;
  \item applicable rule versions before the outcome became final.
\end{enumerate}
Ex post equality of the realized answer is necessary but not sufficient.
\end{criterion}

\newpage
\subsection{Tier definitions}\label{subsec:tiers}

\begin{table}[H]
\centering
\caption{Semantic match tiers and permitted use.}
\label{tab:tiers}
\small
\begin{tabularx}{\textwidth}{@{}p{0.10\textwidth}p{0.47\textwidth}X@{}}
\toprule
Tier & Definition & Permitted analysis \\
\midrule
A & Payoff-functional equivalence and archived rule support satisfy Criterion~\ref{crit:tier-a}. & Membership in $\mathcal M^A_{\mathrm{sem}}$; exact or interval paired use only after the corresponding downstream conditions are met. \\
B & One documented nontrivial difference remains, but the pair supports a pre-specified sensitivity or bounding analysis. & Sensitivity, partial identification, or signed scenario analysis; not primary pooled matching. \\
C & Same broad event family or realized answer, with material semantic differences or incomplete rule support. & Unpaired institutional distributions and match-attrition reporting only. \\
Rejected & Different proposition, polarity, payoff support, or unresolved identity. & No cross-venue finality comparison. \\
\bottomrule
\end{tabularx}
\end{table}

\begin{figure}[t]
\centering
\begin{tikzpicture}[node distance=6mm]
\node[pwide,text width=48mm] (cand) {Structured candidate pairs\\exact event fields before text};
\node[pwide,text width=48mm,below=of cand] (rules) {Rule and payoff-state review\\source, threshold, timing, exceptions};
\node[pwide,text width=34mm,below=10mm of rules,xshift=-45mm] (a) {Tier A\\payoff-equivalent};
\node[pwide,text width=34mm,below=10mm of rules] (b) {Tier B\\one bounded difference};
\node[pwide,text width=34mm,below=10mm of rules,xshift=45mm] (c) {Tier C / rejected\\family-only or incompatible};
\draw[parr] (cand)--(rules);
\draw[parr] (rules)--(a);
\draw[parr] (rules)--(b);
\draw[parr] (rules)--(c);
\node[psoft,text width=32mm,below=7mm of a] (pa) {Primary paired analysis\\after clock/calendar conditions};
\node[psoft,text width=32mm,below=7mm of b] (pb) {Sensitivity or bounds};
\node[psoft,text width=32mm,below=7mm of c] (pc) {Unpaired institutional evidence};
\draw[parr] (a)--(pa);\draw[parr] (b)--(pb);\draw[parr] (c)--(pc);
\end{tikzpicture}
\caption{Payoff-preserving match tiers. A common title or realized answer can generate a candidate but cannot promote a pair to Tier A.}
\label{fig:matching-tiers}
\end{figure}

\subsection{Candidate generation}\label{subsec:candidate-generation}

Candidate generation uses exact structured fields before text. Registered keys include event category, scheduled date, jurisdiction, measured variable, unit, threshold, source, and terminal payoff support. Text similarity can rank candidates for review but cannot create an accepted relation. Let $\mathcal M_{\mathrm{candidate}}$ denote the resulting structured candidate registry before semantic adjudication.

The candidate graph can be many-to-many. One economic event can have several strikes or duplicated listings on one venue. The pipeline therefore separates three identities:
\begin{enumerate}
  \item economic event identity;
  \item venue contract identity;
  \item comparison-pair identity.
\end{enumerate}
A market count is not an event count, and several child contracts do not create independent real-world outcomes.

\subsection{Double-pass adjudication}\label{subsec:double-pass}

Every Tier-A candidate receives two independent semantic reviews before reconciliation. Reviewers see frozen rules and source records but not the downstream duration contrast. The adjudication record stores initial decisions, disagreement fields, reconciliation, and exact evidence versions. Agreement is reported by field and overall tier; a reconciled match never erases initial disagreement.

\subsection{Temporal-overlap registry}\label{subsec:overlap-registry}

Each accepted semantic pair is then joined to an overlap registry. The registry records whether both venue histories cover the same event, whether the required clocks are path-level or terminal-only, and whether acquisition regimes are compatible. This produces four practical groups:
\begin{itemize}
  \item Tier A, path-complete, common-window pairs: $\mathcal M_{\mathrm{pair,exact}}$ for the interval whose endpoints are exact;
  \item Tier A, partially observed common-window pairs: $\mathcal M_{\mathrm{pair,interval}}$ for bounds and sign identification;
  \item Tier A specifications without common event histories: $\mathcal M_{\mathrm{sem}}$ evidence for a future concurrent study, not current paired data;
  \item Tier B/C contracts: sensitivity or unpaired cohorts, with their tier never hidden.
\end{itemize}

The published Polymarket population and prospective Kalshi cohort do not become paired through topic similarity. A matched-event result requires an actual common event in a shared evidence window.

\section{Identification and Statistical Protocol}\label{sec:identification}

The paper distinguishes paired, interval-qualified, and standardized unpaired estimands. They answer different questions and are never merged into one headline number.

\newpage
\subsection{Paired duration estimands}\label{subsec:paired-estimands}

For pair $i\in\mathcal M_{\mathrm{pair,exact}}^{a,b}$ with an accepted certificate, define
\begin{align}
D^{\mathrm{first\rightarrow mech}}_{i,v}
&=t^{\mathrm{mech}}_{i,v}-t^{\mathrm{first}}_{i,v},\\
D^{\mathrm{mech\rightarrow venue}}_{i,v}
&=t^{\mathrm{venue}}_{i,v}-t^{\mathrm{mech}}_{i,v},\\
D^{\mathrm{first\rightarrow venue}}_{i,v}
&=t^{\mathrm{venue}}_{i,v}-t^{\mathrm{first}}_{i,v},\\
D^{\mathrm{dec\rightarrow first}}_{i,v}
&=t^{\mathrm{first}}_{i,v}-t^{\mathrm{dec}}_{i,v},
\end{align}
where the last estimand exists only for the source/decidability-qualified subset. The paired contrast is $\Delta_i^D=D_{i,K}-D_{i,P}$.

The primary summary is the distribution of within-pair contrasts, not the difference between two independently estimated venue medians. Reported coordinates include the median, quartiles, tail shares over registered thresholds, and signed-rank or randomization-style diagnostics where the matched units justify them. Inference clusters at the economic-event or source-release level, not the child-market level.

\subsection{Interval-qualified pairs}\label{subsec:interval-pairs}

For $i\in\mathcal M_{\mathrm{pair,interval}}^{a,b}$, one or both durations are intervals and the pair contributes an identified contrast interval by Proposition~\ref{prop:interval-sign}. The paper reports:
\begin{itemize}
  \item the number of pairs with sign-identified positive, negative, and ambiguous contrasts;
  \item lower and upper bounds on threshold shares;
  \item interval widths by venue, endpoint, and event class;
  \item sensitivity to exact-only versus exact-plus-interval cohorts.
\end{itemize}
No midpoint is inserted in the primary analysis.

\subsection{Right censoring and multistate paths}\label{subsec:censoring}

Venue-native survival distributions use the correct risk-set origin for each functional transition. A market that has not reached its endpoint by the common administrative boundary is right-censored. Kaplan--Meier curves are used for one terminal endpoint when their assumptions are defensible; Aalen--Johansen estimates describe observed transitions among first result, contested/revised, mechanism-final, and venue-final states \parencite{kaplan1958,aalen1978,andersen1993}.

Cross-venue survival contrasts require the same functional origin and endpoint. A Polymarket protocol-final-to-redemption curve is not compared directly with a Kalshi determined-to-finalized curve: one begins after a payout rule is recorded and ends at holder action, while the other spans venue review and settlement.

\subsection{Standardized unpaired estimands}\label{subsec:standardized}

When exact event pairing is unavailable, the paper reports native distributions and a standardized mechanism-associated contrast over a declared common-support population. Pre-specified standardization variables can include event class, binary/scalar form, source class, scheduled horizon, early-close eligibility, rule complexity, and calendar regime. The reference weights $w(x)$ are specified before outcome inspection.

The resulting contrast is descriptive. It does not identify a causal effect of assigning an event to a venue, because listing choice, product design, source policy, and event selection are endogenous. Proposition~\ref{prop:decomposition} separates standardized within-stratum difference from composition.

\subsection{Common support and attrition}\label{subsec:support}

Every result begins with a denominator ladder:
\[
N_{\mathrm{candidate}}
\rightarrow N_{\mathrm{sem}}
\rightarrow \left(N_{\mathrm{pair,exact}},N_{\mathrm{pair,interval}}\right)
\rightarrow N_{\mathrm{analysis}}.
\]
The semantic, exact-pair, and interval-pair denominators are displayed separately; attrition reasons are mutually exclusive within each transition. The paper reports how accepted pairs differ from rejected or unavailable candidates. A small exact cohort is not hidden behind a large candidate count.

\subsection{Dependence and multiplicity}\label{subsec:dependence}

Several child markets can share one event, one source release, one determination second, or one oracle request generation. The default inference cluster is the narrowest upstream unit that can generate common finality shocks. For matched macro releases this can be the source release; for event families it can be the economic event. Collection or series clustering is a sensitivity when MVE instantiation creates many related contracts.

\subsection{Negative controls}\label{subsec:negative-controls}

The registered negative controls are:
\begin{enumerate}
  \item title-only or realized-answer-only pseudo-matches;
  \item endpoint-label substitution without functional mapping;
  \item shuffled venue assignment within broad event class;
  \item unrelated source releases matched on calendar time;
  \item complete-case estimates compared with full attrition and bounds;
  \item terminal-snapshot path reconstruction compared with prospectively observed histories;
  \item artificial midpoint imputation for interval clocks, reported only to quantify induced sign changes.
\end{enumerate}
A headline result is weakened if equally strong structure appears under a deliberately invalid match or endpoint definition.

\begin{figure}[t]
\centering
\begin{tikzpicture}[node distance=6mm and 8mm]
\node[psoft,text width=34mm] (sem) {Semantic tier\\payoff functional};
\node[psoft,text width=34mm,right=of sem] (map) {Endpoint maps\\same functional roles};
\node[psoft,text width=34mm,right=of map] (grade) {Observation grades\\exact /\\ interval /\\ missing};
\node[psoft,text width=34mm,below=14mm of sem] (cal) {Calendar support\\same event and window};
\node[psoft,text width=34mm,right=of cal] (cens) {Censoring rule\\common risk set};
\node[psoft,text width=34mm,right=of cens] (unit) {Inference unit\\event / release / collection};

\coordinate (busY) at ($(cens.south)+(0,-10mm)$);
\coordinate (busL) at ($(cal.west |- busY)+(-5mm,0)$);
\coordinate (busR) at ($(unit.east |- busY)+(5mm,0)$);
\coordinate (gapOne) at ($(cal.east)!0.5!(cens.west)$);
\coordinate (gapTwo) at ($(cens.east)!0.5!(unit.west)$);
\draw[line width=0.65pt] (busL) -- (busR);
\foreach \n in {cal,cens,unit}{\draw[line width=0.65pt] (\n.south) -- (\n.south |- busY);}
\draw[line width=0.65pt] (sem.south) -- ++(0,-2.5mm) -| (busL);
\draw[line width=0.65pt] (map.south) -- ++(0,-2.5mm) -| (gapOne |- busY);
\draw[line width=0.65pt] (grade.south) -- ++(0,-2.5mm) -| (busR);

\node[pwide,text width=53mm,below=18mm of cens] (cert) {Comparison certificate\\$\mathcal C_{i_P,i_K}^{a,b}$};
\draw[parr] (cens.south |- busY) -- (cert.north);
\node[pwide,text width=53mm,below=of cert] (est) {Admissible paired estimand\\or explicit lower-strength class};
\draw[parr] (cert.south)--(est.north);
\end{tikzpicture}
\caption{Comparison certificate. Comparability is a conjunction of semantic, endpoint, evidence, calendar, censoring, and dependence conditions. Connectors terminate at node borders and do not imply a scalar score.}
\label{fig:certificate}
\end{figure}

\section{Multidimensional Finality Outcomes}\label{sec:outcomes}

Latency is one coordinate of finality, not its definition. The empirical comparison reports a vector of outcomes and preserves trade-offs.

\subsection{Latency and state occupation}\label{subsec:latency-occupation}

For functional state $s$, define occupation
\[
\Omega_{i,v,s}(a,b)=\int_a^b \ind\{S_{i,v}(t)=s\}\,dt.
\]
Two markets can share first and last timestamps while spending different time in contestability, amendment, protocol-final-but-unredeemed, or other intermediate states. State occupation therefore complements scalar durations.

\newpage
\subsection{Path complexity}\label{subsec:path-complexity}

Path complexity is reported through exact native marks rather than one synthetic index alone. Components include:
\begin{itemize}
  \item number of request generations or determinations;
  \item observed disputes, resets, amendments, or re-determinations;
  \item close-time revisions or reopenings;
  \item number of rule/source versions relevant to the path;
  \item number of distinct authority transitions;
  \item terminal-only or left-truncated coverage class.
\end{itemize}
A composite complexity score, if used, is secondary and sensitivity-tested across weights.

\subsection{Contestability and revision}\label{subsec:contestability}

Contestability is not measured by the same native event on both venues. Polymarket exposes rule-defined proposal liveness, disputes, resets, and successor generations. Kalshi exposes a determination/finalization path and, where observable, a settlement-timer window, disputed/amended states, or re-determination evidence. Following Definition~\ref{def:contestability-episodes}, the comparison reports:
\begin{enumerate}
  \item whether the applicable rules create an ordinary challenge or review opportunity;
  \item first identified activation $t^{\mathrm{contest,on}}_{1}$ and, where identified, every episode $[t^{\mathrm{on}}_r,t^{\mathrm{off}}_r)$;
  \item contestability occupation $\Omega^{\mathrm{contest}}$, number of episodes, and right-censored episode time;
  \item observed dispute, reset, amendment, or re-determination marks within a qualified denominator;
  \item whether the final result differs from the first recorded result;
  \item time from first result to final mechanism result.
\end{enumerate}
Absence of a public dispute mark is not coded as absence of contestability. A timer-defined opportunity, an observed dispute mark, and full interval occupation are separate evidence objects and receive separate grades.

\subsection{Rule and source observability}\label{subsec:rule-source}

For each venue, report the proportion of the analysis population with:
\begin{itemize}
  \item archived applicable rules before the relevant outcome;
  \item exact named source identity;
  \item exact release object;
  \item exact or interval publication time;
  \item exact or interval contractual-decidability time;
  \item version history through finality.
\end{itemize}
This is both a measurement-quality outcome and an institutional characteristic. A faster visible terminal state can coexist with less observable upstream evidence.

\subsection{Governance structure}\label{subsec:governance}

Governance comparison remains native-unit first. Polymarket can report execution-address concentration, proposer/disputer concentration, access configuration, and route composition subject to identity limits. Kalshi can report public venue actions, documented authority, and review-state incidence, but not necessarily individual internal decision makers. A common scalar governance concentration is not produced unless units and denominators are proven compatible.

\subsection{Venue/protocol finality and cash realization}\label{subsec:cash}

The terminal accounting layer differs sharply. On Polymarket, protocol payout recording makes claims redeemable and observed redemption records holder action. On Kalshi, public finalization represents venue settlement, while member-level settlement evidence may be authenticated and account-specific. The comparison therefore has two policies:
\begin{enumerate}
  \item compare venue/protocol terminal states when their economic role is aligned;
  \item compare cash realization only on a separate subset with compatible account-level evidence.
\end{enumerate}
The public population study does not impute Kalshi member cash from finalization or Polymarket holder cash from protocol finality.

\subsection{Capital lock}\label{subsec:capital-lock}

For a notional capital requirement $K_i$ and annualized service rate $r$, let $\Delta t_{i,v}^{a,b}=t_{i,v}^{b}-t_{i,v}^{a}$ and let $Y$ be one year in the same time unit as the timestamps. A simple sensitivity is
\[
C_{i,v}^{a,b}(r)=K_i\left[\exp\left\{r\frac{\Delta t_{i,v}^{a,b}}{Y}\right\}-1\right].
\]
The exponent is dimensionless. This is not a welfare estimate. It converts aligned duration coordinates into a common financing unit under explicit assumptions. State-specific variants apply different service rates to contestable, venue-final, and holder-controlled intervals.

\subsection{Pareto comparison}\label{subsec:pareto}

\Needspace{12\baselineskip}
Contestability and revision are not globally monotone goods, so the paper specifies two orientation profiles rather than declaring one universal ordering.
\begin{itemize}
  \item \emph{Speed--capital profile}: shorter aligned latency, shorter contestability occupation, shorter post-first-result revision time, lower capital lock, and higher observability are preferred.
  \item \emph{Reviewability profile}: an observable challenge opportunity and higher rule/source observability are preferred, while aligned latency and capital lock remain lower-is-better; revision incidence is reported but not assigned a monotone direction.
\end{itemize}
For each profile, dominance is certified only when every load-bearing coordinate has a compatible identified interval. Profile-specific dominance and non-dominance are reported separately. When no venue dominates under the registered profiles, the identified object is a trade-off surface rather than a winner.

\begin{figure}[t]
\centering
\begin{tikzpicture}[node distance=5mm and 10mm]
\node[pwide,text width=35mm] (lat) {Latency and state occupation};
\node[pwide,text width=35mm,right=of lat] (con) {Contestability and revision};
\node[pwide,text width=35mm,right=of con] (obs) {Rule/source observability};
\node[pwide,text width=35mm,below=of lat] (gov) {Governance-native structure};
\node[pwide,text width=35mm,right=of gov] (cash) {Venue/protocol and cash boundary};
\node[pwide,text width=35mm,right=of cash] (cap) {Capital-lock sensitivity};

\coordinate (busY) at ($(cash.south)+(0,-8mm)$);
\coordinate (busL) at ($(gov.west |- busY)+(-5mm,0)$);
\coordinate (busR) at ($(cap.east |- busY)+(5mm,0)$);
\coordinate (gapOne) at ($(gov.east)!0.5!(cash.west)$);
\coordinate (gapTwo) at ($(cash.east)!0.5!(cap.west)$);
\draw[line width=0.65pt] (busL) -- (busR);
\foreach \n in {gov,cash,cap}{\draw[line width=0.65pt] (\n.south) -- (\n.south |- busY);}
\draw[line width=0.65pt] (lat.south) -- ++(0,-2.2mm) -| (busL);
\draw[line width=0.65pt] (con.south) -- ++(0,-2.2mm) -| (gapOne |- busY);
\draw[line width=0.65pt] (obs.south) -- ++(0,-2.2mm) -| (busR);

\node[psoft,text width=58mm,below=18mm of cash] (pareto) {Coordinatewise comparison\\Pareto dominance or explicit trade-off};
\draw[parr] (cash.south |- busY) -- (pareto.north);
\end{tikzpicture}
\caption{Multidimensional outcome space. A scalar venue ranking is not produced unless one venue dominates or a normative weighting function is declared.}
\label{fig:multidimensional}
\end{figure}

\section{Polymarket Evidence Layer}\label{sec:polymarket-input}

This section uses published Polymarket evidence as one-venue inputs. The numbers are not estimates of Kalshi and are not cross-venue differences.

\subsection{Oracle and adapter authority}\label{subsec:p5a-input}

Paper 5 Part I freezes 185,550 initialized adapter-question instances, 182,671 with at least one Oracle request, and 2,879 structural no-request questions. It reconstructs 184,148 request generations and 1,477 successor-generation rows. The exact stable-ID metadata relation matches 104,032 questions, leaves 81,518 unmatched, and has no ambiguous exact match in the documented relation \parencite{p5a2026}.

Published evidence measures request, proposal, dispute, reset, Oracle settlement, adapter terminality, and clarification timing. It does not measure population-wide source publication or contractual decidability. Paper 8 therefore treats those two Polymarket coordinates as unmeasured unless a separate qualified subset is produced.

\subsection{Protocol finality and observed redemption}\label{subsec:p5b-input}

Paper 5 Part II freezes an exact bridge of 108,638 linked conditions. Of these, 99,283 have an observed protocol-resolution event at the fixed snapshot. Within the resolved risk set, 92,158 have an observed redemption of any amount and 91,817 an observed positive-payout redemption. Condition-specific Kaplan--Meier medians from first protocol resolution are 182 seconds for any redemption and 200 seconds for positive-payout redemption \parencite{p5b2026}.

The exact-linked payout taxonomy contains 53,847 canonical $(0,1)$ vectors, 45,024 canonical $(1,0)$ vectors, 410 fifty-fifty vectors, two other valid vectors, and 9,355 conditions with no observed resolution. Cross-contract Oracle--adapter--protocol ordering is intentionally conservative: 823 conditions have an interval-qualified terminal generation, 48 have multiple candidate generations, 91,638 have no compatible terminal generation in the published evidence, and 16,129 are right-censored or otherwise unevaluable.

\begin{table}[H]
\centering
\caption{Published Polymarket evidence and its Paper 8 use. Different rows have different units and denominators.}
\label{tab:poly-inputs}
\small
\begin{tabularx}{\textwidth}{@{}XrX@{}}
\toprule
Published object & Count & Permitted comparative use \\
\midrule
Initialized adapter-question instances & 185,550 & Upstream population and request-path attrition \\
Questions with at least one Oracle request & 182,671 & Request-path denominator, not source-publication coverage \\
Exact-linked CTF conditions & 108,638 & Protocol linkage and payout/finality population \\
Protocol-resolved linked conditions & 99,283 & Protocol-finality and redemption risk set \\
Any redemption observed & 92,158 & Holder-action endpoint; not entitlement completion \\
Positive redemption observed & 91,817 & Positive holder-realization endpoint \\
Interval-qualified terminal generation & 823 & Interval-compatible cross-layer path subset \\
No compatible terminal generation & 91,638 & Explicit non-comparability/attrition, not zero delay \\
\bottomrule
\end{tabularx}
\end{table}

\subsection{What is not imported}\label{subsec:not-imported}

Paper 8 does not upgrade a Polymarket coordinate merely because Kalshi exposes a superficially similar field. Specifically:
\begin{itemize}
  \item request creation is not source publication or contractual decidability;
  \item Oracle settlement is not protocol payout recording;
  \item protocol payout recording is not holder cash realization;
  \item observed redemption is not entitlement completion;
  \item an identity join is not a causal timing relation;
  \item missing cross-contract order is not repaired by equal timestamps.
\end{itemize}

The published upstream G-VALUE boundary remains binding. Paper 8 can compare observed holder actions or capital-time sensitivities, but it cannot compare fractions of total economic entitlement completed without a new balance-complete denominator.

\section{Kalshi Native Finality and Temporal-Overlap Contract}\label{sec:kalshi-contract}

The Kalshi empirical programme supplies several distinct inputs. Paper 8 consumes each only for the coordinates it owns.

\subsection{Mandatory and optional upstreams}\label{subsec:mandatory-upstreams}

Paper 7.1 supplies the Kalshi lifecycle and finality evidence used here: defined cohorts, exact endpoint-source classes, exact determination-to-endpoint pairs, and observation-gap qualifications.

Paper 7.3 is mandatory only for comparisons beginning at source publication or contractual decidability. It must deliver blinded, reproducible source objects, first and stable decidability clocks, interval bounds, adjudication grades, and reviewer agreement. Markets without these clocks remain eligible for later functional intervals such as first-result-to-finalization.

Paper 7.2 supplies MVE population and dependence structure. It becomes load-bearing when MVE contracts enter the comparison because ticker multiplicity, collection identity, and shared underlying legs determine the inference unit. Paper 7.4 is not an upstream finality authority; it consumes Paper 8-compatible clocks for price analysis.

\subsection{Three Kalshi evidence populations}\label{subsec:three-populations}

The comparison must keep three Kalshi populations separate:
\begin{enumerate}
  \item a bounded historical terminal tranche, useful for overlap candidate generation and terminal fields but not full path frequencies;
  \item a prospective ordinary/MVE lifecycle cohort, useful for complete or gap-qualified paths but not automatically paired to the published Polymarket historical population;
  \item a source/decidability adjudication subset, useful for upstream semantic clocks but not representative unless its selection and attrition are modeled.
\end{enumerate}

This separation is essential. A terminal historical record can overlap the Polymarket calendar while lacking the path needed for contestability. A prospective path can be complete while referring to an event outside the published Polymarket follow-up. A decidability sample can have excellent clocks while being too small for venue-wide inference.

\subsection{Calendar-overlap requirement}\label{subsec:calendar-overlap-requirement}

Define $\mathcal E_P$ and $\mathcal E_K$ as economic-event identities represented in the accepted venue populations. Three nested objects are defined separately.

First, the semantic registry is
\[
\mathcal M_{\mathrm{sem}}
=\left\{(i,j): e(i)=e(j),\;T^{\mathrm{sem}}_{ij}\in\{A,B\}\right\},
\]
with the adjudicated tier retained on every row. Let $\mathcal M^{A}_{\mathrm{sem}}=\{(i,j)\in\mathcal M_{\mathrm{sem}}:T^{\mathrm{sem}}_{ij}=A\}$. The semantic registry records adjudication evidence; it is not yet a duration denominator.

For each registered functional interval $(a,b)$, the exact paired set is
\[
\begin{aligned}
\mathcal M^{a,b}_{\mathrm{pair,exact}}
=\bigl\{(i,j)\in\mathcal M_{\mathrm{sem}}:\;&
T^{\mathrm{sem}}_{ij}=A,\quad W^{\mathrm{cal}}_{ij}=A,\\
& G_{i,P,a}=G_{i,P,b}=G_{j,K,a}=G_{j,K,b}=E,\\
& \mathcal C_{ij}^{a,b}\text{ passes}\bigr\}.
\end{aligned}
\]
Exact here refers jointly to Tier-A semantics, common-event calendar support, and exact compatible endpoints.

The interval-qualified paired set is
\[
\begin{aligned}
\mathcal M^{a,b}_{\mathrm{pair,interval}}
=\bigl\{(i,j)\in\mathcal M_{\mathrm{sem}}:\;&
T^{\mathrm{sem}}_{ij}=A,\quad W^{\mathrm{cal}}_{ij}\in\{A,B\},\\
& G^{a,b}_{ij}\in\{E,I\}^{4},\\
& \mathcal C_{ij}^{a,b}\text{ satisfies the interval-comparison conditions}\bigr\}
\setminus\mathcal M^{a,b}_{\mathrm{pair,exact}}.
\end{aligned}
\]
The three denominators are never merged: $\mathcal M_{\mathrm{sem}}$ measures semantic admissibility, $\mathcal M_{\mathrm{pair,exact}}$ supports point-observed within-event contrasts, and $\mathcal M_{\mathrm{pair,interval}}$ supports bounds only.

The prospective Kalshi cohort belongs to either paired set only when the same economic events are represented in a compatible Polymarket evidence layer. Otherwise it contributes to the standardized unpaired comparison.

The bounded historical Kalshi tranche is the natural first source of overlap candidates with the published Polymarket population. Its terminal-only coverage is carried into each comparison certificate. A Tier-A semantic match with a missing path coordinate can remain in $\mathcal M_{\mathrm{sem}}$ and, when bounded endpoints exist, in $\mathcal M_{\mathrm{pair,interval}}$; it cannot enter $\mathcal M_{\mathrm{pair,exact}}$.

\newpage
\subsection{KMVE V1 population and observation evidence}\label{subsec:kalshi-r31-authority}

KMVE V1 defines the Kalshi MVE market-object population for the registered seven-day interval without replacing the final follow-up path. The population record contains 190 terminal-validated time-shard leaves and reproduces
\[
7{,}617{,}371=7{,}611{,}594+5{,}777,
\]
where 7,611,594 unique exact tickers fall inside the half-open REST creation-time interval and 5,777 overlap-envelope observations are retained but excluded. Exact duplicate groups and immutable identity conflicts are zero \parencite{kmve2026}.

\begin{table}[H]
\centering\footnotesize
\caption{Kalshi KMVE V1 native population and observation-surface inputs used by Paper 8.}
\begin{tabularx}{\textwidth}{@{}XrX@{}}
\toprule
Native object & Count & Permitted Paper 8 use \\
\midrule
REST enrollment MVE market objects & 7,611,594 & Native population scale and composition; not paired events \\
Observed exact event keys & 5,262,526 & Event dependence and composition \\
Current-validated collection identities & 3 & Current hierarchy; historical versions remain unmeasured \\
Current parent-derived series identities & 3 & Current collection-to-series relation; not a market-row field \\
Selected-leg primitives & 83,701 & MVE dependence and primitive-reuse structure \\
Exact structural signatures & 7,611,594 & Tuple uniqueness under the registered formula; not independent shocks \\
Current-snapshot activity-eligible objects & 2,692,787 & Snapshot descriptor; not historical trading \\
WS enrollment created notifications & 1,487,330 & Notification-surface denominator, not REST creation population \\
WS enrollment lifecycle tickers & 2,498,640 & Lifecycle-surface denominator, including left truncation \\
WS created $\cap$ REST & 276,177 & Exact ticker overlap without gap extrapolation \\
WS lifecycle $\cap$ REST & 378,129 & Exact ticker overlap without gap extrapolation \\
\bottomrule
\end{tabularx}
\label{tab:p8-r31-authority}
\end{table}

The cross-surface divergence is load-bearing. Only 18.57 percent of the WS created-notification set intersects the REST creation-time population, while the same intersection is 3.63 percent of REST membership. This is not a recall estimate: REST and WebSocket surfaces expose different evidence objects, and the WebSocket stream includes left-truncated histories and documented gaps. Paper 8 therefore treats population authority, lifecycle authority, and cross-surface classification as separate coordinates.

The closure correction independently validates the exact current collection keys through read-only GET responses and supplies their current series relations:
\begin{align*}
\texttt{KXMVECROSSCATEGORY-R} &\mapsto \texttt{KXMVECROSSCATEGORY},\\
\texttt{KXMVECROSSCATEGORY-SHARD1-R} &\mapsto \texttt{KXMVECROSSCATEGORY-SHARD1},\\
\texttt{KXMVESPORTSMULTIGAMEEXTENDED-R} &\mapsto \texttt{KXMVESPORTSMULTIGAMEEXTENDED}.
\end{align*}
These are post-freeze current-identity observations. They do not establish historical collection versions, historical series rules, or rule state at market creation.

The structural-signature audit excludes the output MVE ticker. It uses canonical JSON containing the exact event key, current market-row series field where present, collection ticker, and a lexicographically sorted multiset of exact selected-leg tuples $(\text{event ticker},\text{market ticker},\text{side})$. Duplicate legs are retained. The 7,611,594 market objects have 7,611,594 exact signatures and no collision group under that formula. Because the event identity is load-bearing, this result establishes exact object-structure uniqueness rather than 7.6 million independent economic propositions.

Activity is also coordinate-specific. All market rows expose parseable current-snapshot \texttt{volume\_fp}, \texttt{volume\_24h\_fp}, and \texttt{open\_interest\_fp}. Positive \texttt{volume\_fp} occurs for 2,692,787 objects; positive \texttt{open\_interest\_fp} for 2,671,402; \texttt{volume\_24h\_fp} is zero throughout; and the four registered trade-count aliases are absent throughout. The registered current-snapshot activity-eligible class is
\[
\mathbf 1\{\texttt{volume\_fp}>0\;\lor\;\texttt{open\_interest\_fp}>0\;\lor\;\text{first available trade-count alias}>0\},
\]
and contains 2,692,787 objects, or 35.38 percent. It is not a historical traded-market denominator and does not identify when activity occurred.

These hierarchy results constrain the inference unit when MVE markets enter Paper 8. Millions of child tickers cannot be treated as millions of independent event-finality observations. Collection, event, selected-leg, structural-signature, and source-release clustering remain separate. Series identity, activity history, rule/source history, and exact recovery of 1,211,153 WS-created-notification tickers absent from the REST union remain field-specific partial observations; no broad recovery census is substituted.

\subsection{Qualified Kalshi-native finality evidence}\label{subsec:kalshi-finality-authority}
The registered follow-up closes exact public endpoint status for the corrected native cohorts. The ordinary primary cohort contains 152,694 markets reconstructed as at risk at $T_0$. The MVE primary cohort is the 7,611,594-object REST enrollment population.

\begin{table}[H]
\centering\footnotesize
\caption{Kalshi-native finality inputs available to Paper 8. Counts are exact tickers and are not cross-venue estimates.}
\label{tab:kalshi-finality-inputs}
\begin{tabularx}{\textwidth}{@{}L{0.19\textwidth}L{0.31\textwidth}rY@{}}
\toprule
Family & Native object & Count & Permitted Paper 8 use \\
\midrule
Ordinary & Reconstructed $T_0$-at-risk cohort & 152,694 & Native ordinary risk set; exact enrollment births remain separate/unidentified \\\addlinespace[2pt]
Ordinary & Exact WS endpoint by follow-up & 71,657 & Venue-final endpoint count in the corrected stock \\\addlinespace[2pt]
Ordinary & No exact endpoint by follow-up & 81,037 & Native no-endpoint class, not member default \\\addlinespace[2pt]
Ordinary & Exact determination-to-endpoint pairs & 70,979 & Native exact-clock distribution; mean 322.50 s, max 7,133 s \\
\addlinespace
MVE & REST enrollment population & 7,611,594 & Native MVE market-object cohort \\\addlinespace[2pt]
MVE & REST exact endpoint only & 7,237,993 & Exact public finalization without an exact joined WS endpoint \\\addlinespace[2pt]
MVE & WS/REST exact agreement & 119,212 & Exact endpoint supported on both surfaces \\\addlinespace[2pt]
MVE & WS exact endpoint only & 371 & Exact lifecycle endpoint absent from the joined REST endpoint set \\\addlinespace[2pt]
MVE & No exact endpoint by follow-up & 254,018 & No admissible exact endpoint; current state is not back-projected \\\addlinespace[2pt]
MVE & Exact determination-to-endpoint pairs & 126,806 & Native exact-clock distribution; mean 9.676 s, max 15 s \\
\bottomrule
\end{tabularx}
\end{table}

The MVE endpoint union contains 7,357,576 exact public endpoints, or 96.66 percent of the cohort. Yet 7,230,770 exact finalizations lack an exact joined WebSocket path. In the ordinary cohort, 71,657 exact endpoints exceed the 70,979 exact determination-to-endpoint pairs by 678 markets. These are load-bearing observation-surface results: public endpoint finality can be substantially more complete than determination-path visibility.

The gap ledger reinforces the distinction. A gap intersects the registered path interval for 6,421,306 MVE objects and only eight ordinary cohort members. Two exact clocks remain point-identified despite an intervening gap, but absence of intermediate revisions does not. Current settlement-timer fields are present for every MVE object; historical version-consistent timer binding is not. Paper 8 can therefore compare aligned public endpoint roles and qualified exact durations, but it cannot treat the timer field as a historical contestability clock or infer member cash.

\newpage
\subsection{Additional evidence required for particular comparisons}\label{subsec:required-kalshi-artifacts}

The remaining cross-venue integration requires at minimum:
\begin{itemize}
  \item exact event, series, and collection parents where relevant;
  \item applicable rule/source versions and their retrieval/valid-time relation;
  \item determination and finalization event identities;
  \item dispute/amendment observability denominator;
  \item source/decidability packet registry for the qualified subset;
  \item gap, reconnect, reconciliation, and cross-surface classifications;
  \item documented table and figure derivations for any reported comparison.
\end{itemize}

\subsection{Observation-surface caution}\label{subsec:surface-caution}

WebSocket delivery order, REST retrieval order, and valid-time state order are different objects. Exact payload clocks can identify a transition while receipt order remains reversed or tied. A public lifecycle identifier can exist even when a later current REST lookup is unavailable. Paper 8 therefore imports the final Kalshi semantic ledger, not raw frame order or a current-availability proxy.

\section{Registered Results and Claim Contract}\label{sec:results-contract}

Kalshi-native endpoint and exact-clock results are available. Cross-venue numerical results require semantic, calendar, and endpoint alignment. Results appear in a fixed order so that attractive latency differences cannot precede their denominators and comparability evidence.

\subsection{Result order}\label{subsec:result-order}

\begin{enumerate}
  \item evidence and population table for each venue, including the defined Kalshi KMVE V1 population/observability crosswalk;
  \item candidate-pair generation and semantic attrition;
  \item Tier A/B/C adjudication and reviewer agreement;
  \item calendar-overlap and observation-grade matrix;
  \item endpoint-specific exact, interval, censored, and unavailable counts;
  \item exact paired duration/path contrasts and interval-qualified bounds, with separate denominators;
  \item standardized unpaired mechanism distributions and composition decomposition;
  \item contestability, revision, rule/source observability, and governance-native results;
  \item capital-lock sensitivities and Pareto comparison;
  \item negative controls, robustness, and claim downgrades.
\end{enumerate}

\newpage
\subsection{Version-locked empirical questions and decision rules}\label{subsec:hypotheses}

Questions for which no defensible ex ante scalar threshold exists are registered as estimands rather than directional hypotheses. The three hypotheses below have explicit \textsc{pass}/\textsc{fail}/\textsc{blocked} rules. No adjective such as ``substantial,'' ``nontrivial,'' or ``material'' acts as an unstated threshold.

\textbf{Q1: Match attrition.} Report
\[
r_A=\frac{|\mathcal M^{A}_{\mathrm{sem}}|}{|\mathcal M_{\mathrm{candidate}}|},\qquad
r_E^{a,b}=\frac{|\mathcal M^{a,b}_{\mathrm{pair,exact}}|}{|\mathcal M^{A}_{\mathrm{sem}}|},\qquad
r_I^{a,b}=\frac{|\mathcal M^{a,b}_{\mathrm{pair,interval}}|}{|\mathcal M^{A}_{\mathrm{sem}}|}.
\]
The disposition is \textsc{estimated} when all denominators and tier decisions rebuild, and \textsc{blocked} otherwise. No directional threshold is imposed.

\textbf{H2: Endpoint sensitivity.} For each registered interval $j$, let $\mathcal J_j$ be the identified uncertainty set for $\theta_j=D_{K,j}-D_{P,j}$ and define
\[
\mathcal O_j=
\begin{cases}
P\text{ faster}, & \inf \mathcal J_j>0,\\
K\text{ faster}, & \sup \mathcal J_j<0,\\
\text{no strict order}, & 0\in\mathcal J_j.
\end{cases}
\]
H2 \textsc{passes} if at least one registered interval is sign-identified as $P$ faster and another as $K$ faster. It \textsc{fails} if at least two intervals are sign-identified and all strict order classes agree. It is \textsc{blocked} otherwise.

\textbf{Q3: Observation-surface sensitivity.} Compare grade-aware reconstruction with terminal-only and complete-case reconstructions using the frozen vector
\[
A_{\mathrm{obs}}=
\left(\Delta N_{\mathrm{admissible}},\;
\operatorname{TV}(p_{\mathrm{path}}^{G},p_{\mathrm{path}}^{T}),\;
\Delta\widetilde D,\;
N_{\mathrm{order\ class\ changes}}\right).
\]
Here $\operatorname{TV}$ is total-variation distance between the two registered path-class distributions. The four components are reported without a post hoc binary threshold. The disposition is \textsc{estimated} or \textsc{blocked} according to the observation and reproducibility conditions.

\textbf{Q4: Composition.} For every standardized unpaired duration family, report the raw contrast $\theta_{\mathrm{raw}}$, standardized contrast $\theta_{\mathrm{std}}$, and composition residual
\[
C_{\mathrm{comp}}=\theta_{\mathrm{raw}}-\theta_{\mathrm{std}}.
\]
When $|C_{\mathrm{comp}}|+|\theta_{\mathrm{std}}|>0$, also report the bounded descriptive share
\[
\rho_{\mathrm{comp}}=
\frac{|C_{\mathrm{comp}}|}{|C_{\mathrm{comp}}|+|\theta_{\mathrm{std}}|}.
\]
No qualitative label is substituted for these quantities. The disposition is \textsc{estimated} or \textsc{blocked}.

\textbf{H5: No venue-independent dominance.} For each orientation profile in Section~\ref{subsec:pareto}, orient every coordinate so positive values favor Polymarket and let $\mathcal I_{p,d}$ be the compatible identified set for the venue difference. Polymarket is certified to dominate under profile $p$ only if $\inf\mathcal I_{p,d}\ge0$ for every load-bearing coordinate and $\inf\mathcal I_{p,d}>0$ for at least one; Kalshi dominance is defined symmetrically. H5 \textsc{passes} when neither venue is certified to dominate under every registered profile. It \textsc{fails--P} or \textsc{fails--K} when the same venue is certified to dominate under every profile. It is \textsc{blocked} when any profile lacks a load-bearing comparable coordinate.

\textbf{H6: Native-label schedule non-portability.} For each pre-registered label substitution $\ell$, let $A^{\mathrm{native}}_{\ell}(t)$ and $A^{\mathrm{sub}}_{\ell}(t)$ denote the native and naively substituted control states, and define
\[
\Delta T_{\ell}=\mu\{t:A^{\mathrm{native}}_{\ell}(t)\ne A^{\mathrm{sub}}_{\ell}(t)\}.
\]
H6 \textsc{passes} if at least one substitution has a lower identified bound exceeding the coarser timestamp precision of its two state histories. It \textsc{fails} if every evaluable substitution has $\Delta T_{\ell}=0$. It is \textsc{blocked} if no substitution has compatible state histories.

Every hypothesis can pass, fail, or remain blocked. Every registered question can be estimated or blocked. Neither is redefined on an opportunistic proxy.

\subsection{Planned comparative outputs}\label{subsec:planned-comparative-outputs}

When the comparison conditions are met, the study reports the population definition, semantic-match disposition, attrition, endpoint grades, exact paired contrasts, interval bounds, standardized unpaired contrasts, contestability and revision measures, observability and governance measures, capital-lock sensitivities, and the disposition of each registered question. Every reported table or figure identifies its denominator, statistical unit, endpoint map, evidence grade, calendar class, and reproducibility record.

\subsection{Failure and downgrade rules}\label{subsec:downgrade}

\begin{itemize}
  \item If Tier-A matching is sparse, report exact match scarcity and do not relax payoff semantics.
  \item If common-calendar path evidence is sparse, restrict paired results to terminal or interval-qualified coordinates and emphasize standardized unpaired distributions.
  \item If source/decidability clocks are sparse, remove those intervals rather than substituting close, request, or determination.
  \item If cash-realization evidence is incompatible, compare venue/protocol terminality and leave cash unranked.
  \item If common support fails, report native distributions and composition rather than extrapolating.
  \item If governance units cannot be aligned, retain separate native governance tables.
  \item If a reported analysis cannot be reproduced from its documented inputs, the dependent claim is not reported.
\end{itemize}

\begin{figure}[t]
\centering
\begin{tikzpicture}[node distance=4mm]
\node[pbox,text width=22mm] (u) {Venue-native authorities};
\node[pbox,text width=22mm,right=of u] (m) {Candidate and tier registry};
\node[pbox,text width=22mm,right=of m] (c) {Comparison certificates};
\node[pbox,text width=22mm,right=of c] (a) {Exact paired / interval paired / standardized unpaired};
\node[pbox,text width=22mm,right=of a] (g) {Evidence-qualified tables and figures};
\foreach \x/\y in {u/m,m/c,c/a,a/g}{\draw[parr] (\x)--(\y);}
\node[font=\scriptsize,align=center,below=8mm of c] {If required evidence is unavailable, the dependent result is suppressed or downgraded;\\the endpoint is not changed after inspection.};
\end{tikzpicture}
\caption{Analytical results pipeline. Reported outputs remain downstream of native evidence, semantic matching, and endpoint evidence.}
\label{fig:results-pipeline}
\end{figure}

\section{Implications for Leveraged Event Markets}\label{sec:leverage}

Paper 8 provides a comparison framework with venue-native empirical inputs; it is not a product specification. Its mechanism-design contribution is to identify which finality coordinates a leveraged overlay can treat as portable and which remain venue-specific.

\subsection{Obligation-specific terminal states}\label{subsec:obligation-terminal}

A financed obligation should terminate at the state that actually satisfies it. An oracle-final result may be sufficient for directional certainty but insufficient for claim delivery. A protocol-final claim may be redeemable while collateral has not been realized by a holder. A venue-final Kalshi position may be represented as settled without providing a public population clock for member withdrawability.

A cross-venue engine therefore needs an obligation map
\[
\mathcal O\mapsto k(\mathcal O),
\]
where $k(\mathcal O)$ is the functional finality coordinate that discharges obligation $\mathcal O$. The same synthetic product can use different native evidence adapters while preserving one functional discharge rule.

\subsection{State-triggered margin}\label{subsec:state-margin}

A margin schedule can condition on pre-final uncertainty, active contestability, final mechanism result, venue/protocol finality, or residual realization service. Paper 6 formalizes such controls \parencite{p6_2026}. Paper 8 adds a portability constraint: a schedule calibrated to a Polymarket dispute/reset state cannot be moved to a Kalshi label without proving that the target state covers the same residual risk and is observed with comparable delay.

\subsection{Finality carry}\label{subsec:finality-carry}

If ordinary price discovery or trading has ended but capital remains committed, any continuing charge should be linked to the service being provided. Before final mechanism result, it can compensate adjudication or review exposure. Between mechanism finality and venue/protocol terminality, it can compensate downstream execution or accounting. After technical finality, it can represent custody or realization service. Calling every interval ``funding'' hides the change in function.

\subsection{Conversion and delivery}\label{subsec:conversion}

Physical or claim conversion requires exact deliverable identity. On Polymarket this includes condition, outcome slots, payout vector, collateral route, and claim capacity. On Kalshi it would require a lawful and operational interface to the relevant venue position; Paper 8 does not assume one exists. Functional finality comparison can calibrate when a conversion would be safe, but it does not establish delivery authority.

\subsection{Cross-venue basis and latency arbitrage}\label{subsec:latency-arb}

Different finality paths can create intervals in which one venue has a first result or terminal state while the other remains contestable. A cross-venue trader can face basis risk, execution risk, or asymmetric close rather than a riskless arbitrage. Paper 8 specifies the matched-clock coordinates and endpoint alignment needed to measure such intervals once the semantic, calendar, and clock requirements are satisfied. A profitability claim additionally requires prices, fees, access, capital constraints, and executable depth, which are outside the finality comparison itself.

\subsection{Multi-leg instruments}\label{subsec:multi-leg}

For a spread containing one Polymarket and one Kalshi leg, each leg retains its native finality vector. The joint state becomes asynchronous when one leg reaches a functional coordinate before the other. Paper 9 formalizes the resulting residual exposure \parencite{p9_2026}. Paper 8 supplies endpoint alignment; it does not assume that equal native labels synchronize the legs.

\newpage
\section{Limitations}\label{sec:limitations}

The study has several limitations that remain substantive even after the empirical integration.

\textbf{Venue and version specificity.} The comparison concerns the pinned Polymarket and Kalshi architectures represented by the upstream authorities. API fields, rules, access, contracts, or review procedures can change. A later venue version requires a successor evidence snapshot.

\textbf{Asymmetric maturity.} The Polymarket evidence is published and historical. The Kalshi population and key-level finality follow-up are complete at their stated evidence grades, while contractual-decidability adjudication and cross-venue matching remain outstanding. Path, revision, and timer claims remain qualified where the public observation process is incomplete. The paper therefore combines two venue-native evidence layers without claiming a matched cross-venue timing result.

\textbf{Current-versus-historical field time.} The three collection identities and their series relations are independently validated as current observations after the enrollment freeze. Snapshot activity fields are also current observations. Neither layer is backprojected to historical rule state or activity time.

\textbf{Structural signatures.} Exact signature uniqueness is formula-specific. The registered signature excludes the output MVE ticker and preserves the exact selected-leg multiset, but includes event identity. It therefore measures exact tuple uniqueness, not independent information, liquidity, or source releases.

\textbf{Calendar mismatch.} The published Polymarket follow-up ends before the prospective Kalshi cohort begins. Exact event pairing must use an overlap-qualified historical tranche or a future concurrent population. Standardized unpaired comparison does not substitute for paired event evidence.

\textbf{Historical path loss.} The bounded Kalshi historical tranche can support terminal fields and semantic matching but cannot reveal transitions that were never preserved. A larger terminal archive would not solve that identification problem.

\textbf{Semantic adjudication.} Tier assignment includes human judgment about rule equivalence. Double review and documented evidence improve reproducibility but do not eliminate disagreement, especially for exceptional-event clauses and source corrections.

\textbf{Source and decidability missingness.} Polymarket does not have population-wide source-publication or contractual-decidability clocks in published evidence. Kalshi Paper 7.3 produces them only for an adjudicated subset. Comparisons on those coordinates can therefore be selected and small.

\textbf{Cash-finality asymmetry.} Polymarket exposes public redemption events, while comparable Kalshi member evidence can be authenticated and account-specific. Public venue finalization and holder redemption remain different endpoints. Cash comparisons can be unavailable even when venue-final comparisons are strong.

\textbf{Entitlement boundary.} Polymarket observed redemption does not identify the fraction of total winning entitlement redeemed. The missing balance-complete denominator prevents value-weighted completion comparisons.

\textbf{Internal governance.} Public event records do not disclose all internal review, escalation, or decision-making. Governance comparisons concern observable mechanisms and documented authority, not private organizational quality.

\textbf{Address identity.} Execution addresses are pseudonymous technical actors, not necessarily beneficial owners or independent decision makers. Address concentration is not directly comparable with venue-administered review actions.

\textbf{Selection and listing.} Events are not randomly assigned to venues. Paired availability, contract design, and listing choice can depend on event salience, source availability, and operational complexity. Standardization remains descriptive.

\textbf{Censoring and gaps.} Prospective capture can contain transport gaps; historical conditions can be left-truncated; late events can be administratively censored. Strictly qualified subsets can differ from the full population.

\textbf{Timestamp precision.} Exact timestamps can share one-second or block precision while secondary order differs or is unavailable. Zero duration does not imply simultaneity of internal processing, instant payout, or zero economic capital lock.

\textbf{No truth-accuracy ranking.} The paper does not determine whether a venue's final answer is substantively correct. Accuracy requires an external truth standard and treatment of ambiguity, revisions, and source integrity.

\textbf{No legal ranking.} Public technical finality is not a complete account of legal finality, customer property rights, insolvency treatment, or regulatory protection.

\textbf{No welfare function.} The multidimensional comparison does not choose universal weights for speed, contestability, observability, or capital lock. A normative ranking requires an explicit objective and stakeholder perspective.

\subsection{Kalshi-native evidence boundary}
The corrected Kalshi endpoint union is strong for venue-level finalization but asymmetric across surfaces. Most MVE endpoints are REST-only, while exact determination paths form a much smaller subset. Transport gaps do not erase exact two-clock durations, but they block population-wide claims about absence of intermediate revisions. Current settlement timers are not historically version-bound. The ordinary cohort is a reconstructed $T_0$ stock rather than a complete exact birth census. These qualifications apply to native Kalshi results before any cross-venue matching issue is considered.

\section{Conclusion}\label{sec:conclusion}
Polymarket and Kalshi now both have empirical native-finality authorities, but they still do not share one universal settlement clock. Polymarket separates proposal, dispute, Oracle finality, adapter terminality, protocol payout recording, redeemability, and observed holder redemption. Kalshi separates determination and venue finalization, while its public endpoint evidence is divided between lifecycle messages and exact REST settlement fields.

The Kalshi key-level follow-up shows why the observation process belongs inside the comparison. Exact public endpoints are identified for 7,357,576 of 7,611,594 MVE market objects, yet only 126,806 have an exact determination-to-endpoint pair. Most MVE exact endpoints are REST-only. In the reconstructed ordinary stock, 71,657 of 152,694 markets have an exact endpoint and 70,979 have an exact determination-to-endpoint pair. MVE exact pairs average 9.676 seconds and ordinary pairs 322.50 seconds, but these native distributions are not a causal architecture comparison and do not justify a venue winner.

Paper 8 therefore combines empirical native-finality evidence from both venues with a formal comparison framework. It aligns mechanism roles, grades observation quality, and defines standardized native distributions. A matched Polymarket--Kalshi contrast still requires blind contractual-decidability adjudication, payoff-functional matching, and common-calendar support for a shared event population.

The final result may be multidimensional. When neither venue dominates under the registered orientation profiles, the identified object is a trade-off surface rather than a scalar ranking. A venue can expose faster native endpoint clocks while offering different contestability, rule observability, holder-realization, or path-completeness properties. That is the appropriate input for event-linked leverage, margin, conversion, and capital-service design.

\appendix
\section{Functional Endpoint Dictionary}\label{app:endpoints}

\begin{longtable}{@{}>{\raggedright\arraybackslash}p{0.14\textwidth}>{\raggedright\arraybackslash}p{0.22\textwidth}>{\raggedright\arraybackslash}p{0.27\textwidth}>{\raggedright\arraybackslash}p{0.27\textwidth}@{}}
\caption{Operational endpoint dictionary.}\label{tab:endpoint-dictionary}\\
\toprule
Coordinate & Function & Polymarket candidate & Kalshi candidate \\
\midrule
\endfirsthead
\toprule
Coordinate & Function & Polymarket candidate & Kalshi candidate \\
\midrule
\endhead
$t^{\mathrm{src}}$ & Named public evidence release & Independently archived named source, generally unavailable at population scale & Exact source release object in Paper 7.3 subset \\
$t^{\mathrm{dec}}$ & Rules and available evidence imply one admissible result & Generally unmeasured in published Paper 5 evidence & First or stable contractual decidability in Paper 7.3 \\
$t^{\mathrm{close}}$ & Ordinary venue trading ceases for the contract & Market-specific close where independently available; not a Paper 5 finality state & Versioned market close \\
$t^{\mathrm{first}}$ & First recorded result under the native mechanism & First eligible Oracle proposal & First venue determination \\
$\mathcal I^{\mathrm{contest}}$ & One or more challenge/review episodes, their marks, and occupation & Proposal-liveness window(s), dispute, reset, or successor generation & Settlement-timer/review window(s), disputed/amended state, or re-determination; first activation reported separately when needed \\
$t^{\mathrm{mech}}$ & Final result of the adjudication mechanism & Successful final Oracle value & Last applicable determination after observed revisions \\
$t^{\mathrm{venue}}$ & Payout/accounting rule becomes terminal at venue or protocol & CTF payout recording; adapter terminal kept separately & Venue finalized/settled market state \\
$t^{\mathrm{cash}}$ & Holder/member realization & Observed redemption; positive endpoint separate & Member settlement evidence if comparable and lawfully observed \\
\bottomrule
\end{longtable}

\newpage
\subsection{Observation-code dictionary}\label{app:grade-dictionary}

\begin{table}[H]
\centering
\small
\begin{tabularx}{\textwidth}{@{}p{0.12\textwidth}Y@{}}
\toprule
Code & Meaning \\
\midrule
$E$ & Exact load-bearing native event or source timestamp with stable identity and documented precision. \\
$I$ & Transition lies between exact lower and upper observations. \\
$S$ & Current or terminal representation without transition history. \\
$Prx$ & Documented substitute with a stated functional imperfection. \\
$U$ & Available evidence does not identify the coordinate. \\
$X$ & Load-bearing evidence disagrees. \\
\bottomrule
\end{tabularx}
\end{table}

A coordinate can be exact while another coordinate for the same contract is unmeasured. Grades attach to endpoints, not entire markets.

\subsection{Contestability-code dictionary}\label{app:contestability-dictionary}

For every venue-native episode, the analytical relation stores: episode number; $t^{\mathrm{on}}$ lower/upper bound; $t^{\mathrm{off}}$ lower/upper bound; opening authority; closing authority; dispute/reset/amendment marks; result revision; censoring; and coordinate-specific grades. A scalar first-activation anchor is named explicitly as $t^{\mathrm{contest,on}}_1$. No generic $t^{\mathrm{contest}}$ is used in duration tables.

\section{Matched-Event Codebook}\label{app:codebook}

Each candidate pair stores the following fields before outcome comparison:
\begin{enumerate}
  \item exact venue contract identifiers and economic-event identifier;
  \item title and normalized proposition;
  \item polarity and payoff support;
  \item measured variable, unit, operator, threshold, and rounding;
  \item event interval, cut-off, and time zone;
  \item source identity, precedence, and release version;
  \item correction, revision, and preliminary/final policy;
  \item postponement, cancellation, invalidity, and void treatment;
  \item rule versions and retrieval/valid-time lineage;
  \item realized outcome and terminal payout;
  \item reviewer-one and reviewer-two field decisions;
  \item reconciled tier and disagreement record;
  \item calendar-support class;
  \item endpoint-map version and observation grades;
  \item economic-event/source-release inference cluster.
\end{enumerate}

\subsection{Multiplicity rules}\label{app:multiplicity}

When several venue contracts map to one economic event, the primary design either selects one pre-specified canonical contract per payoff functional or retains all child contracts with event-level clustering and explicit multiplicity weights. Selecting the fastest child after observing finality is prohibited. The comparison registry preserves rejected siblings and the selection rule.

\subsection{Tier-B sensitivity classes}\label{app:tier-b}

Tier B differences are coded by type rather than pooled:
\begin{itemize}
  \item source difference;
  \item threshold or rounding difference;
  \item timing-window difference;
  \item cancellation/void difference;
  \item correction-policy difference;
  \item payoff-support difference that permits a bounded scenario analysis;
  \item historical-rule incompleteness.
\end{itemize}
Each class has a pre-specified sensitivity interpretation. A Tier-B pair never silently enters the Tier-A denominator.

\section{Comparability Conditions and Claim Consequences}\label{app:gates}

\begin{longtable}{@{}>{\raggedright\arraybackslash}p{0.24\textwidth}>{\raggedright\arraybackslash}p{0.35\textwidth}>{\raggedright\arraybackslash}p{0.32\textwidth}@{}}
\caption{Cumulative comparability conditions for the empirical comparison.}\label{tab:gates}\\
\toprule
Condition & Evidence requirement & Claims unavailable when the condition is not met \\
\midrule
\endfirsthead
\toprule
Condition & Evidence requirement & Claims unavailable when the condition is not met \\
\midrule
\endhead
Polymarket input layer & Published Part I/II identities, states, denominators, and documented source records & Polymarket-native input claims \\
Kalshi registered population & Defined population boundary, native denominators, parent hierarchy, and cross-surface classification & Kalshi native population and observability claims \\
Current hierarchy fields & Documented current collection/series relation, signature formula, and activity-field dictionary & Current hierarchy, signature, and activity-field claims; not historical versions or trading time \\
Historical hierarchy fields & Historical collection/series/rule versions and time-indexed activity where claimed & Historical hierarchy, rule-state, or activity-path comparisons \\
Kalshi native lifecycle record & Qualified cohorts, endpoint-source union, exact clocks, and observation grades & Native endpoint/finality results; path, revision, and timer claims remain unavailable where their evidence grade is insufficient \\
Source and decidability record & Source object, rule graph, first/stable decidability, and documented review evidence & Source/decidability comparisons \\
Semantic equivalence & Independent payoff-functional review with retained tier & Membership in $\mathcal M_{\mathrm{sem}}$ \\
Exact paired observation & Tier A, common calendar support, exact compatible endpoint grades, and common censoring & Point-observed within-event contrast \\
Interval paired observation & Tier A, compatible exact/interval endpoint bounds, and common censoring & Bounded within-event contrast \\
Functional clock alignment & Equivalent endpoint function and sufficient observation grade & Endpoint-specific scalar duration or interval \\
Common risk set & Common origin and censoring rule & Survival or tail comparison \\
Common support & Declared support and stable standardization weights & Standardized unpaired contrast \\
Contestability evidence & Identified contestability episodes/occupation or explicitly graded marks & Cross-venue contestability duration or occupation comparison \\
Comparable governance unit & Aligned governance units or separately reported native metrics & Common governance-concentration comparison \\
Compatible cash evidence & Compatible holder/member realization evidence & Cash-finality comparison \\
Entitlement/value denominator & Exact entitlement or value denominator where claimed & Entitlement-completion or unredeemed-value comparison \\
Reproducible analysis record & Regeneration of each reported result from documented inputs & Release-bound empirical claims \\
\bottomrule
\end{longtable}

\subsection{Claim-strength labels}\label{app:claim-labels}

Every numerical statement carries one label:
\begin{itemize}
  \item \textsc{paired exact};
  \item \textsc{paired interval identified};
  \item \textsc{standardized unpaired};
  \item \textsc{native population exact};
  \item \textsc{audited subset exact};
  \item \textsc{snapshot descriptive};
  \item \textsc{proxy/model-based};
  \item \textsc{unmeasured};
  \item \textsc{blocked}.
\end{itemize}
A table can contain several labels when columns rely on different evidence classes.

\section{Statistical and Censoring Protocol}\label{app:statistics}

\subsection{Exact paired cohort}\label{app:exact-paired}

For each interval $(a,b)$, the exact paired cohort is $\mathcal M^{a,b}_{\mathrm{pair,exact}}$ and contains one comparison row per accepted pair and functional interval. Repeated child markets share an economic-event cluster. Descriptive summaries include the empirical CDF of within-pair contrasts, median and quantiles, sign shares, and registered tail thresholds. Any inferential interval states the finite-population or superpopulation interpretation explicitly.

\newpage
\subsection{Interval-qualified paired cohort}\label{app:interval-paired}

For each interval $(a,b)$, $\mathcal M^{a,b}_{\mathrm{pair,interval}}$ is reported as a denominator separate from the exact paired cohort. A pair enters this set only when Tier-A semantics and the interval comparison certificate pass; terminal-only semantic matches without two-sided bounds remain outside it.

\subsection{Interval bounds}\label{app:interval-bounds}

For interval contrasts $[L_i,U_i]$, the lower bound on the share with $\Delta_i\le c$ is the fraction with $U_i\le c$; the upper bound is the fraction with $L_i\le c$. Analogous bounds apply to positive/negative sign shares. The distribution of interval width is reported because wide bounds can dominate apparent uncertainty.

\subsection{Survival}\label{app:survival}

Venue-native survival curves use one functional origin and endpoint. Paired survival differences are reported only when calendar support and censoring align. Otherwise the paper reports separate native curves and numbers at risk. Markets entering after their origin are left-truncated; markets not reaching the endpoint by the fixed boundary are right-censored.

\subsection{Standardization}\label{app:standardization}

The reference population $w(x)$ is selected before outcome inspection. The primary candidates are: pooled overlap population, Tier-A candidate population, or a balanced target distribution over pre-specified strata. Positivity is checked by stratum counts and overlap diagnostics. Empty or near-empty cells are not filled by unconstrained extrapolation.

\subsection{Missingness diagnostics}\label{app:missingness}

For every endpoint, report observation-grade distribution by venue and covariates. A diagnostic model for exact-clock availability can be used to describe selection, but inverse-probability adjustment is secondary and requires stated assumptions. The primary result remains grade-stratified.

\subsection{Multiple testing}\label{app:multiple-testing}

The paper has a small number of registered outcome families. It emphasizes effect distributions and confidence intervals rather than a large collection of binary tests. Where several thresholds or dimensions are tested, family-wise or false-discovery adjustments are reported as sensitivity rather than used to hide the multivariate outcome structure.

\newpage
\section{Reproducibility}\label{app:reproducibility}

\subsection{Lineage}\label{app:lineage}

Every result follows the lineage
\begin{align*}
\text{venue-native evidence}
&\longrightarrow \text{functional endpoint map}
\longrightarrow \text{semantic pair},\\
\text{semantic pair}
&\longrightarrow \text{comparison certificate}
\longrightarrow \text{analytical row}
\longrightarrow \text{table or figure}.
\end{align*}
Each edge records the source identity, analytical definition, observation grade, and scope of the resulting claim.

\subsection{Immutable upstreams}\label{app:upstreams}

The published Paper 5 records are used as documented public inputs and are not recomputed opportunistically from current public sources. KMVE V1 defines the Kalshi population layer. The qualified Kalshi lifecycle evidence defines native cohorts, endpoint-source classes, exact clock pairs, and gap grades. A later API response cannot silently replace a missing value in the historical evidence record.

\subsection{Analytical reporting layer}\label{app:generated-layer}

Numerical tables and empirical figures are derived from structured analytical inputs. Each reported object identifies its denominator, endpoint map, evidence-grade filter, calendar class, and the definitions needed for independent reproduction.

\subsection{Reproducibility check}\label{app:offline-rebuild}

A reproducibility check verifies semantic relations, pair tiers, endpoint grades, calendar classes, censoring, analytical outputs, figure sources, bibliography, and the final PDF. Where file metadata can vary, the check compares definitions, key sets, semantics, and rendering rather than treating superficial byte variation as a scientific difference.

\begin{figure}[t]
\centering
\begin{tikzpicture}[node distance=5mm]
\node[pbox,text width=22mm] (raw) {Documented native evidence};
\node[pbox,text width=22mm,right=of raw] (native) {Native state histories};
\node[pbox,text width=22mm,right=of native] (map) {Functional endpoint maps};
\node[pbox,text width=22mm,right=of map] (pair) {Semantic pair and certificate};
\node[pbox,text width=22mm,right=of pair] (out) {Reported tables and figures};
\foreach \x/\y in {raw/native,native/map,map/pair,pair/out}{\draw[parr] (\x)--(\y);}
\node[psoft,text width=48mm,below=10mm of map] (rebuild) {Reproducibility check\\definitions, records, references, PDF render};
\draw[parr] (out.south) |- (rebuild.east);
\draw[parr] (raw.south) |- (rebuild.west);
\end{tikzpicture}
\caption{Reproducibility lineage. A current web or API response cannot silently replace a missing historical value in the documented evidence record.}
\label{fig:evidence-pipeline}
\end{figure}

\newpage
\section{Historical-prefix validation of the Kalshi native layer}\label{sec:kalshi-historical-prefix}

The Kalshi native-finality inputs are estimated from the final registered enrollment and follow-up populations. To test the stability of the exact-clock machinery without substituting a different denominator, we separately reconstructed an earlier historical-prefix registry. The reconstruction preserves both published semantics---first determination to settlement and last determination to settlement---as separate records keyed by lifecycle stream, ticker, and pair definition.

For each semantic definition, the ordinary prefix contains 31,828 exact positive-duration pairs.  The MVE prefix contains 653,479 exact pairs: 130,973 positive-duration and 522,506 equal-second pairs at one-second timestamp precision.  Exact aggregate parity with the earlier audit is reproduced.  The bidirectional comparison with the final follow-up registry is nevertheless partial: 29,190 ordinary and 21,950 MVE pairs match exactly under each definition, while 2,638 ordinary and 631,529 MVE historical-prefix pairs do not match a final-registry pair and 41,789 ordinary and 104,856 MVE final-registry pairs are absent from the historical prefix.

\begin{table}[t]
\centering
\small
\caption{Kalshi historical-prefix registry and exact overlap with the final native follow-up registry. Counts are reported separately for each published pair definition; the two definitions have identical aggregate class counts but remain distinct registry records.}
\label{tab:historical-prefix-crosswalk}
\begin{tabularx}{\textwidth}{@{}lrrrr>{\raggedright\arraybackslash}X@{}}
\toprule
Population & Historical pairs & Exact overlap & Historical-only & Final-registry-only & Interpretation \\
\midrule
Ordinary & 31,828 & 29,190 & 2,638 & 41,789 & High historical-prefix overlap; the final registry additionally contains later or differently qualified pairs. \\
MVE & 653,479 & 21,950 & 631,529 & 104,856 & The historical prefix and final enrollment registry are substantially different observation populations. \\
\bottomrule
\end{tabularx}
\end{table}

The low MVE overlap is not treated as a venue-performance result.  It demonstrates that the historical prefix, the prospective REST enrollment population, and the completed WebSocket/REST follow-up registry are different observation surfaces.  The equal-second historical class likewise cannot be transported into a claim about payout, contestability, capital release, or Polymarket--Kalshi relative speed.  The comparative design therefore uses the final follow-up registry for Kalshi native endpoints and durations, and retains the historical-prefix reconstruction solely as a reproducibility check.  Matched cross-venue estimates remain conditional on semantic, endpoint, and calendar-support certificates.

\newpage
\section{Data, Disclosure, and Research Boundaries}\label{app:disclosures}

\subsection*{Data and Code Availability}
The Kalshi population and hierarchy layer is available as the Kalshi Multivariate Event Market Materialization Dataset (KMVE), Mendeley Data, V1, doi: 10.17632/fn65786cg6.1. The Polymarket upstream papers are public on SSRN. This paper reports venue-native inputs and no matched cross-venue numerical result.

\subsection*{Generative AI Disclosure}
OpenAI ChatGPT and Codex were used for editorial and technical assistance during manuscript preparation. The author made all substantive research decisions, reviewed the final manuscript, and assumes full responsibility for its contents.

\subsection*{Funding}
This research received no external funding.

\subsection*{Competing Interests}
The author is affiliated with the Research Department of Devnull FZCO and leads the ForesightFlow research programme. No external sponsor influenced the research design, analysis, interpretation, or decision to publish. The article does not evaluate a commercial product or make investment recommendations.

\printbibliography
\end{document}